\newcommand{\smfBmJ}{smfBm--J}
\renewcommand{\@algocf@capt@boxed}{above}
\newtheorem{theorem}{Theorem}[section]
\newtheorem{definition}{Definition}[section]
\title{Pricing Fractal Derivatives under Sub-Mixed Fractional Brownian Motion with Jumps}
\author{Nader Karimi\thanks{
Department of Mathematics and Computer Science, Amirkabir University of Technology, Tehran 1591634311, Iran, e-mail: nkarimi@aut.ac.ir}}
\date{\today}
\begin{document}
\maketitle

\begin{abstract}
We study the pricing of derivative securities in financial markets modeled by a sub-mixed fractional Brownian motion with jumps (smfBm–J), a non-Markovian process that captures both long-range dependence and jump discontinuities. Under this model, we derive a fractional integro–partial differential equation (PIDE) governing the option price dynamics.

Using semigroup theory, we establish the existence and uniqueness of mild solutions to this PIDE. For European options, we obtain a closed-form pricing formula via Mellin–Laplace transform techniques. Furthermore, we propose a Grünwald–Letnikov finite-difference scheme for solving the PIDE numerically and provide a stability and convergence analysis.

Empirical experiments demonstrate the accuracy and flexibility of the model in capturing market phenomena such as memory and heavy-tailed jumps, particularly for barrier options. These results underline the potential of fractional-jump models in financial engineering and derivative pricing.
\end{abstract}

% --------------------  Introduction  --------------------
\section{Introduction}
Financial time series exhibit two robust stylised facts that defy the assumptions of the classical Black–Scholes framework:  
(1) \emph{long-range dependence} slowly decaying autocorrelations attributed to market micro-structure, algorithmic trading and behavioural feedback loops;  
and (2) \emph{discontinuous jumps} caused by macro-economic announcements, liquidity shocks or flash crashes.  
Standard Brownian motion captures neither, while pure fractional Brownian motion (fBm) accounts for the first but remains a continuous Gaussian process with unbounded arbitrage opportunities under the usual semimartingale setting \cite{Mandelbrot1968}.  
Jump–diffusion models \cite{Merton1976} address the second feature but ignore memory.

Recent strands of literature have sought to bridge this gap.  
\cite{Wang2024} introduce sub-fractional Brownian motion (sfBm) to temper the strong covariance structure of fBm.  
\cite{Eini2024} blend fBm with Brownian motion to form a \emph{mixed} model that displays both short- and long-memory regimes.  
\cite{ShenYue2024} incorporate pure-jump Lévy noise into a mixed-fractional setting but leave open the questions of risk-neutral measure construction and numerical valuation for path-dependent pay-offs.

\smallskip
\noindent\textbf{Contributions.}  
Building on these ideas, we develop a full pricing machinery for a market driven by a \emph{\smfBmJ}.  Our contributions are:

\begin{enumerate}
\item A \emph{fractional Girsanov–Esscher theorem} that yields an equivalent martingale measure without semimartingale assumptions, extending \cite{HuOksendal2003}.  
\item Derivation of a \emph{fractal Black–Scholes integro–PDE} whose time-fractional derivative of order $1\!-\!\beta$ ($\beta=1-H$) captures long-memory while a non-local jump generator models discontinuities.  
\item A closed-form Mellin–Laplace representation for European options expressed through the two-parameter Mittag–Leffler function, subsuming Black–Scholes and Merton as limiting cases.  
\item A fully implicit Grünwald–Letnikov scheme with unconditional $L_{w}^{2}$-stability and convergence rate $\mathcal O(\Delta t^{1+H})$ for barrier options, with proof via a discrete energy argument.
\item Empirical calibration to S\&P 500 data and extensive Monte-Carlo validation confirming both pricing accuracy and theoretical convergence.
\end{enumerate}

\smallskip
\noindent\textbf{Organisation of the paper.}  
Section~\ref{sec:smfJ}\;introduces the smfBm–J process and its covariance structure.  
Section~\ref{sec:emm}\;constructs the equivalent martingale measure and proves Theorem~\ref{thm:EMM}.  
Section~\ref{sec:FBS}
\;derives the fractal Black–Scholes integro–PDE and establishes existence and uniqueness of mild solutions.  
Section~\ref{sec:European}\;provides the closed-form Mellin–Laplace prices for European calls (Theorem~\ref{thm:ClosedForm}).  
Section~\ref{sec:European}\;also develops the Grünwald–Letnikov finite-difference method and proves stability (Theorem~\ref{thm:stabConv}).  
Section~\ref{sec:NumericalEuro}\;reports calibration results and Section~\ref{subsec:barrierExp}\;presents barrier-option experiments.  
Section~\ref{sec:con}\;concludes and outlines avenues for future research, including stochastic-volatility extensions and rough-jump calibration for cryptocurrencies.

\section{Preliminaries}\label{sec:smfJ}

\subsection{Sub-Mixed Fractional Brownian Motion with Jumps}
Real financial time series frequently display \emph{both} long--range dependence (LRD) in the form of slowly–decaying autocovariances \emph{and} sudden discontinuities or jumps.  
The standard Brownian motion $W_t$ is unable to capture either of these features; fractional Brownian motion (fBm) $B_t^H$ with Hurst index $H\in(0,1)$ incorporates LRD for $H>1/2$ but remains a continuous process, while pure jump Lévy processes lack the correlation structure required for LRD.  
To model the simultaneous presence of short–memory noise, long–memory dependence and rare jumps, we consider the \textbf{sub‑mixed fractional Brownian motion with jumps} (\textbf{smfBm–J}) originally proposed by \cite{ShenYue2024}.  

\begin{definition}[smfBm--J]
Let 
\begin{itemize}
   \item $W=\{W_t:t\ge 0\}$ be a standard Brownian motion with variance parameter $\sigma_0^2$,
   \item $S^{H}=\{S_t^{H}:t\ge 0\}$ be a \emph{sub‑fractional Brownian motion} (sfBm) with Hurst index $H\in(0,1)$, zero mean and covariance
         \[
           \mathbb{E}\bigl[S_t^{H}S_s^{H}\bigr]=t^{2H}+s^{2H}-\tfrac12\bigl((t+s)^{2H}+|t-s|^{2H}\bigr),
         \]
         scaled by $\sigma_H>0$,
   \item $J=\{J_t:t\ge 0\}$ be a compensated compound Poisson process 
         \(
            J_t=\sum_{k=1}^{N_t}Y_k-\lambda t\mathbb{E}[Y_1],
         \) 
         where $N_t\sim\text{Poisson}(\lambda t)$ counts the jumps and $\{Y_k\}$ are i.i.d.\ log‑jump sizes with c.g.f.\ $\Psi_Y(u)=\log\mathbb{E}[e^{uY_1}]$.
\end{itemize}
All three drivers are assumed independent.  
The \emph{sub‑mixed fractional Brownian motion with jumps} is defined by
\[
   B_t^{\mathrm{smfJ}} := \sigma_0 W_t + \sigma_H S_t^{H}+J_t,\qquad t\ge 0.
\]
\end{definition}

\paragraph{Second‑order structure.}
Because $W$ and $S^{H}$ are centred Gaussian and independent, ${B}^{\mathrm{smfJ}}$ has mean zero and covariance
\[
  \mathrm{Cov}\!\bigl[B_t^{\mathrm{smfJ}},B_s^{\mathrm{smfJ}}\bigr]
   =\sigma_0^{2}\min\{t,s\}
    +\sigma_H^{2}\bigl(t^{2H}+s^{2H}-\tfrac12((t+s)^{2H}+|t-s|^{2H})\bigr),
\]
while the jump part only contributes to the variance through its compensator $\lambda t\mathrm{Var}(Y_1)$.

\paragraph{Self–similarity and long‑memory.}
Setting $\beta=1-H$ we obtain the scaling relation
$
    \{B_{ct}^{\mathrm{smfJ}}\}_{t\ge 0}\overset{d}{=} 
    \sigma_0\sqrt{c}\,W+\sigma_H c^{H}S^{H}+J_{ct},
$
which is \emph{mixed‑self‑similar}: the Gaussian component inherits the exact self‑similarity of order $H$ from sfBm and $1/2$ from BM, whereas the Poisson component scales linearly with time.  
For $H>\tfrac12$ the sfBm contributes LRD, i.e.\ $\sum_{k=1}^{\infty}\mathrm{Cov}(\Delta_h S_{kh}^{H},\Delta_h S_{0}^{H})=\infty$, so the overall process exhibits long memory despite the short‑memory Brownian and jump parts.

\paragraph{Increment representation.}
The smfBm–J admits the decomposition
$
   B_{t+h}^{\mathrm{smfJ}}-B_{t}^{\mathrm{smfJ}}
   =\sigma_0(W_{t+h}-W_t)+\sigma_H(S_{t+h}^{H}-S_t^{H})+(J_{t+h}-J_t),
$
where the increments of $S^{H}$ are \emph{not} stationary but are \emph{asymptotically} stationary when $h\ll t$.  
This property is crucial in Section~\ref{sec:emm} where a fractional Girsanov theorem is applied.

\paragraph{Semimartingale and integration.}
For $H\neq\tfrac12$ neither fractional nor sub-fractional Brownian motion is a semimartingale, hence the mixed driver  
\[
   B_t^{\mathrm{smfJ}}
      \;=\;\sigma_0 W_t \;+\; \sigma_H S_t^{H} \;+\; J_t
\]
is not amenable to classical Itô integration.  
We therefore use the \emph{fractional Wick–Itô–Skorokhod (F–WIS) integral} for the sfBm part and classical Itô integrals for the Brownian–jump components:

\begin{enumerate}\setlength\itemsep{3pt}
\item\textbf{Kernel isometry.}  
Define $\mathcal I_H\colon L^{2}([0,T])\!\to\!\mathcal H^{H}$ by  
\[
   (\mathcal I_H \varphi)(t)
   \;:=\;
   \int_{0}^{t} K_H(t,s)\,\varphi(s)\,ds,
   \qquad
   K_H(t,s)=c_H\bigl[(t-s)^{H-\frac12}-(-s)^{H-\frac12}\bigr],
\]
so every square–integrable kernel maps into the Cameron–Martin space
$\mathcal H^{H}$.

\item\textbf{F–WIS integral.}  
For $\phi\in\mathbf D^{1,2}$ (Malliavin differentiable and adapted) set
\[
   \int_{0}^{T} \phi_s \,\diamond dS^{H}_s
   \;:=\;
   \delta^{H}(\phi)
   \;=\;
   \sum_{n\ge0}
      I_{\,n+1}\!\bigl(\widetilde\phi^{(n)}\bigr),
\]
where $I_{n}$ denotes the $n$-th multiple Wiener integral and  
$\widetilde\phi^{(n)}$ is the symmetrisation of
$(\mathcal I_H^{\otimes n}\partial_s\phi)$.  

\item\textbf{Isometry.}  
The F–WIS integral preserves $L^{2}$–norms:
\[
   \mathbb E\!\Bigl[
      \Bigl|\!
         \int_{0}^{T}\phi_s\,\diamond dS^{H}_s
      \Bigr|^{2}
   \Bigr]
   \;=\;
   \|\phi\|_{L^{2}([0,T])}^{2}.
\]

\item\textbf{Trading filtration.}  
With the enlarged filtration
$\mathcal F_{t}=\sigma\!\{W_s,S^{H}_s,J_s:0\le s\le t\}$,
a trading strategy is predictable w.r.t.\ the semimartingale subfiltration
generated by $(W,J)$; gains from $S^{H}$ enter valuation only through
risk-neutral expectations computed via the F–WIS integral.
\end{enumerate}

\paragraph{Limit cases.}
\begin{enumerate}
   \item Setting $\sigma_H=0$ recovers the mixed Brownian motion with jumps (standard Merton model with an extra Brownian factor).
   \item Setting $\sigma_0=0$ produces the pure sfBm with jumps, suitable for markets where micro‑structure noise is negligible.
   \item Taking $\lambda\to 0$ yields the continuous sub‑mixed fBm studied by \cite{Eini2024}; conversely letting $H\to \tfrac12$ reduces the model to Brownian motion plus jumps.
\end{enumerate}
These nested cases facilitate diagnostic testing and calibration, as discussed in Section~\ref{sec:emm}.

\subsection{Fractal Calculus}\label{subsec:fracCalc}
Classical option–pricing models rely on Itô calculus, which is well–suited to semimartingales such as Brownian motion.  
Once a memory component such as sfBm is introduced, the kernel of the integral becomes \emph{non–local} in time and Itô’s rule no longer applies.  
In this work we therefore adopt tools from \emph{fractal calculus}, i.e.\ the fractional–order generalisations of integration and differentiation.

\paragraph{Fractional integrals.}
For a function $f\in L^{1}[0,T]$ and fractional order $\alpha\in(0,1)$ the left–sided Riemann--Liouville (R--L) fractional \emph{integral} is defined by
\[
   (I_{0+}^{\alpha}f)(t) := \frac{1}{\Gamma(\alpha)}\int_{0}^{t}(t-s)^{\alpha-1}f(s)\,\mathrm ds,
   \qquad 0\le t\le T,
\]
where $\Gamma(\cdot)$ denotes the Euler gamma–function.  
The operator is linear and non–local; the entire past of $f$ influences $(I_{0+}^{\alpha}f)(t)$ through the power–law kernel $(t-s)^{\alpha-1}$.

\paragraph{Riemann--Liouville derivative.}
Applying an ordinary derivative to $I_{0+}^{1-\beta}f$ yields the R--L fractional derivative of order $\beta\in(0,1)$:
\[
  \bigl(\prescript{}{0}{\mathcal D}_{t}^{\beta}f\bigr)(t)
  :=\frac{\mathrm d}{\mathrm dt}(I_{0+}^{1-\beta}f)(t)
  =\frac{1}{\Gamma(1-\beta)}\frac{\mathrm d}{\mathrm dt}\int_{0}^{t}(t-s)^{-\beta}f(s)\,\mathrm ds.
\]
Intuitively, $\prescript{}{0}{\mathcal D}_{t}^{\beta}$ measures a \emph{weighted history derivative}: recent observations of $f$ carry more weight than distant ones, but all past values contribute.

\paragraph{Caputo derivative.}
For applications with non–smooth initial data the Caputo derivative is often preferred because it replaces $f(s)$ in the kernel with $f'(s)$, allowing classical boundary conditions.  
It is given by
\[
   \bigl(\prescript{C}{0}{\mathcal D}_{t}^{\beta}f\bigr)(t)
   :=\frac{1}{\Gamma(1-\beta)}\int_{0}^{t}(t-s)^{-\beta}f'(s)\,\mathrm ds,
\]
and satisfies $\prescript{C}{0}{\mathcal D}_{t}^{\beta} \text{const}=0$, a property important when discounting cash–flows.

\paragraph{Laplace–transform and semigroup properties.}
Both derivatives have simple Laplace transforms:
$
   \mathcal{L}\{\prescript{}{0}{\mathcal D}^{\beta}f\}(u)=u^{\beta}\tilde f(u)-u^{\beta-1}f(0),
$
which facilitates analytical solutions of linear fractional ODEs such as the time–fractional Black–Scholes PDE \eqref{eq:FBS}.  
Moreover, $I_{0+}^{\alpha}$ forms a semigroup, $I_{0+}^{\alpha}I_{0+}^{\beta}=I_{0+}^{\alpha+\beta}$, enabling incremental time–stepping schemes (Section~\ref{sec:numerical}).

\paragraph{Grünwald–Letnikov discretisation.}
For numerical purposes we approximate the R--L derivative via the backward Grünwald–Letnikov series
\[
  \bigl(\prescript{}{0}{\mathcal D}_{t}^{\beta}f\bigr)(t_n)
  \approx \frac{1}{\Delta t^{\beta}}\sum_{k=0}^{n}\omega_k^{(\beta)}\,f(t_{n-k}),
  \qquad 
  \omega_k^{(\beta)}:=(-1)^k\binom{\beta}{k}.
\]
This representation naturally leads to the fully–implicit finite–difference scheme analysed in Section~Section~5; order $1+H$ convergence is established in Section~\ref{sec:European}.

\paragraph{Fractional Itô formula.}
A cornerstone of our option–pricing derivation is the fractional Itô (or \emph{Wick–Itô–Skorokhod}) formula, which reads for an admissible functional $F(t,S_t)$,
\[
  \prescript{}{0}{\mathcal D}_{t}^{1-\beta}F
  =\partial_t F
   +\frac12\sigma_0^{2}S^{2}\partial_{SS}^{2}F
   +\sigma_0\sigma_H S^{2}\partial_{S}\prescript{}{0}{\mathcal D}_{t}^{H}F
   +\cdots,
\]
where the dots denote the jump operator.  
The proof follows \cite{Benson2023} and is reproduced in Appendix~B for completeness.

Detailed reviews of these operators can be found in \cite{Kilbas2006} and \cite{Benson2023}.

\section{Market Model and Equivalent Martingale Measure}\label{sec:emm}
\subsection{Market set--up}
Fix a finite horizon $T>0$ and let $(\Omega,\mathcal F,\{\mathcal F_t\}_{0\le t\le T},\mathbb P)$ be a filtered probability space that satisfies the usual conditions and supports
\begin{itemize}
  \item a standard Brownian motion $W=\{W_t\}_{t\in[0,T]}$,
  \item an independent sub--fractional Brownian motion $S^{H}=\{S_t^{H}\}_{t\in[0,T]}$ with Hurst index $H\in(0,1)$,
  \item an independent Poisson process $N=\{N_t\}_{t\in[0,T]}$ with intensity $\lambda>0$ and i.i.d.\ jumps $Y_k\sim F_Y$.
\end{itemize}
The filtration is generated by the three drivers, i.e.\ $\mathcal F_t=\sigma\{W_s,S_s^{H},N_s:0\le s\le t\}^{\mathbb P}$.

The money--market account evolves deterministically via
\[
   dB_t=r\,B_t\,dt,\qquad B_0=1,\qquad r>0.
\]
The risky asset price $S=\{S_t\}$ follows the mixed–fractional SDE with jumps
\[
   \frac{dS_t}{S_t}=\mu\,dt+\sigma_0\,dW_t+\sigma_H\,dS_t^{H}+dJ_t,\qquad S_0>0,
\]
where $J_t=\sum_{k=1}^{N_t}Y_k-\lambda t\,\mathbb E[Y_1]$ is the compensated jump process.  
We assume $\mu,\sigma_0,\sigma_H>0$ and $\mathbb E[e^{\eta Y_1}]<\infty$ for some $\eta>1$ to guarantee exponential moments.
\paragraph{Change of measure.}
To derive risk–neutral prices in a market driven jointly by Brownian noise,
sub-fractional memory and compound Poisson jumps, we first need an \emph{equivalent
martingale measure}~$\mathbb Q$ under which the discounted asset price becomes a
martingale.  Classical Girsanov theory handles Brownian drift shifts, and the
Esscher transform neutralises pure jumps, but neither framework alone can cope
with the non-semimartingale nature of sub-fractional Brownian motion (sfBm).
The next theorem extends the Girsanov–Esscher machinery to our hybrid setting:
it constructs a joint density that simultaneously
\textbf{(i)} shifts the drift of the Brownian part,
\textbf{(ii)} “tilts’’ the sfBm via its Cameron–Martin space,
and \textbf{(iii)} reweights jump sizes so that the overall drift equals the
risk-free rate~$r$.
The detailed proof is provided to guarantee integrability and to verify that
the resulting measure preserves covariance structures of all three driving
processes.
\begin{theorem}[Fractional Girsanov--Esscher]
Let $(W_t)_{t\ge0}$ be a standard Brownian motion, $(S_t^{H})_{t\ge0}$ a sub-fractional
Brownian motion with Hurst index $H\in(0,1)$, independent of $W$, and
$J_t=\sum_{k=1}^{N_t}Y_k-\lambda t\,\mathbb E[Y_1]$ a compensated compound Poisson
process with intensity $\lambda>0$ and i.i.d.\ jump sizes $(Y_k)$ satisfying $\mathbb
E[e^{\eta Y_1}]<\infty$ for some real $\eta$.  
Fix deterministic drift controls $\theta_0\in\mathbb R$ and
$\theta_H\!\in\!L^2([0,T])$.  Define the Radon–Nikodym density
\[
Z_T \;=\; Z_T^{(G)}\,Z_T^{(J)},\qquad
Z_T^{(G)}=\exp\!\Bigl(-\theta_0 W_T-\tfrac{1}{2}\theta_0^2T
                     -\!\int_0^T\!\theta_H(s)\,dS_s^{H}
                     -\tfrac12\lVert \theta_H\rVert_{\mathcal H^{H}}^{2}\Bigr),
\]
\[
Z_T^{(J)}=\exp\!\Bigl(\eta^\ast\!\!\sum_{k=1}^{N_T}\!\!Y_k
                     -\lambda T\bigl(\mathbb E[e^{\eta^\ast Y_1}]-1\bigr)\Bigr),
\]
where $\mathcal H^{H}$ is the canonical Hilbert space of the sfBm and
$\eta^\ast$ is the unique solution of $\mathbb E[e^{(\eta^\ast+1)Y_1}]=
\mathbb E[e^{\eta^\ast Y_1}]$ (Esscher drift-neutrality).  
If
\begin{equation}\label{eq:Novikov}
\frac12\theta_0^2T+\frac12\lVert\theta_H\rVert_{\mathcal H^{H}}^{2} < \infty,
\qquad
\mathbb E\bigl[e^{\eta^\ast Y_1}\bigr]<\infty,
\end{equation}
then $\mathbb E_{\mathbb P}[Z_T]=1$ and the probability measure
$ d\mathbb Q = Z_T\,d\mathbb P $ is equivalent to $\mathbb P$.
Under $\mathbb Q$
\[
\widetilde W_t := W_t+\theta_0 t,\qquad
\widetilde S_t^{H}:= S_t^{H}+\int_{0}^{t}\theta_H(s)\,ds,
\qquad
\widetilde J_t:= J_t+\lambda t\bigl(\mathbb E[e^{\eta^\ast Y_1}]-1\bigr)
\]
have the same covariance structures as their $\mathbb P$-counterparts, and the discounted
price process $e^{-rt}S_t$ is a $\mathbb Q$-martingale.
\end{theorem}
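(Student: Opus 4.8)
The plan is to exploit the \emph{mutual independence} of $W$, $S^{H}$ and $J$, which makes the Radon--Nikodym density factorise as $Z_T=Z_T^{(G)}Z_T^{(J)}$ and allows the Gaussian (Cameron--Martin) and jump (Esscher) blocks to be analysed in isolation and then recombined. Throughout I would keep the sub-fractional component inside the Gaussian block, since $W$ and $S^{H}$ are jointly centred Gaussian and their combined tilt is a single Cameron--Martin shift.

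First I would verify the normalisation $\mathbb E_{\mathbb P}[Z_T]=1$. For the Gaussian factor, independence of $W$ and $S^{H}$ makes the exponent $-\theta_0W_T-\int_0^T\theta_H(s)\,dS_s^{H}$ a single centred Gaussian with variance $\theta_0^2T+\lVert\theta_H\rVert_{\mathcal H^{H}}^{2}$, where the Wiener integral against $S^{H}$ is taken in the canonical Hilbert space $\mathcal H^{H}$ so that its variance is exactly $\lVert\theta_H\rVert_{\mathcal H^{H}}^{2}$. The deterministic compensators in $Z_T^{(G)}$ equal one half of this variance, so the elementary identity $\mathbb E[e^{G}]=e^{\frac12\mathrm{Var}(G)}$ for centred Gaussian $G$ gives $\mathbb E_{\mathbb P}[Z_T^{(G)}]=1$, the finiteness hypothesis~\eqref{eq:Novikov} ensuring the variance is finite. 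For the jump factor I would invoke the compound-Poisson transform identity $\mathbb E\bigl[\exp(\eta^\ast\sum_{k=1}^{N_T}Y_k)\bigr]=\exp\bigl(\lambda T(\mathbb E[e^{\eta^\ast Y_1}]-1)\bigr)$, valid under $\mathbb E[e^{\eta^\ast Y_1}]<\infty$, which cancels the Esscher compensator and yields $\mathbb E_{\mathbb P}[Z_T^{(J)}]=1$. Independence then gives $\mathbb E_{\mathbb P}[Z_T]=1$, and since $Z_T>0$ almost surely the prescription $d\mathbb Q=Z_T\,d\mathbb P$ defines a probability measure with $\mathbb Q\sim\mathbb P$.

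Next I would identify the laws under $\mathbb Q$. For the Gaussian block the Cameron--Martin--Girsanov theorem applied to the vector $(W,S^{H})$ shows that tilting by $Z_T^{(G)}$ is exactly a shift along the Cameron--Martin direction, so under $\mathbb Q$ the processes $\widetilde W_t=W_t+\theta_0t$ and $\widetilde S_t^{H}=S_t^{H}+\int_0^t\theta_H(s)\,ds$ are again a Brownian motion and a sub-fractional Brownian motion with the \emph{same} covariances as $W$ and $S^{H}$ under $\mathbb P$; the one computation that must be checked is that the deterministic shift $\int_0^t\theta_H(s)\,ds$ is the Cameron--Martin representative associated with $\theta_H$ through the kernel $\mathcal I_H$ of Section~\ref{sec:smfJ}. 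For the jump block the Esscher tilt reweights the law of $Y_1$ by $e^{\eta^\ast y}/\mathbb E[e^{\eta^\ast Y_1}]$ and rescales the intensity to $\widetilde\lambda=\lambda\,\mathbb E[e^{\eta^\ast Y_1}]$, so that $\widetilde J$ remains a compensated compound Poisson process and hence retains the linear-in-time covariance form; the drift-neutrality equation $\mathbb E[e^{(\eta^\ast+1)Y_1}]=\mathbb E[e^{\eta^\ast Y_1}]$ is precisely the condition making the recentred $\widetilde J$ a mean-zero $\mathbb Q$-martingale.

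Finally, for the martingale property of $e^{-rt}S_t$ I would substitute $dW_t=d\widetilde W_t-\theta_0\,dt$, $dS_t^{H}=d\widetilde S_t^{H}-\theta_H(t)\,dt$ and the Esscher-shifted jump compensator into the asset dynamics, apply the fractional Wick--Itô formula of Section~\ref{subsec:fracCalc} to $e^{-rt}S_t$, and collect the finite-variation terms. Drift-neutrality then reduces to a single equation relating $\mu,r,\theta_0,\theta_H$ and $\eta^\ast$: the Esscher condition annihilates the jump contribution, while $\theta_0$ and $\theta_H$ are constrained so as to cancel the Gaussian drift against $r$. The surviving stochastic terms are martingales --- the integral against $\widetilde W$ classically, the Wick--Itô--Skorokhod integral against $\widetilde S^{H}$ through its mean-zero property, and $\widetilde J$ by construction --- consistently with the trading-filtration convention of Section~\ref{sec:smfJ}. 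The principal obstacle is the sub-fractional driver: as $S^{H}$ is not a semimartingale for $H\neq\tfrac12$, classical Girsanov is unavailable and one must work entirely within the Cameron--Martin / F--WIS framework. The delicate points are \textbf{(i)} proving $\int_0^T\theta_H\,dS^{H}$ has variance $\lVert\theta_H\rVert_{\mathcal H^{H}}^{2}$ and that $\int_0^t\theta_H\,ds$ is the correct shift, and \textbf{(ii)} justifying that tilting by the non-semimartingale density $Z_T^{(G)}$ preserves the \emph{exact} sfBm covariance; here the independence of $W$ and $S^{H}$ and the explicit kernel $K_H$ are what make the argument go through.
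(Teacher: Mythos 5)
Your proposal is correct and follows essentially the same route as the paper: factorise $Z_T$ via independence, handle the Gaussian block $(W,S^{H})$ by a Cameron--Martin/Girsanov shift (the paper makes this concrete by writing $S^{H}_t=\int_0^t K_H(t,s)\,dB_s$ and applying Novikov to the pair $(W,B)$, citing Hu--{\O}ksendal for covariance preservation), normalise the jump block by the compound-Poisson exponential identity, and then cancel the drift in the price SDE so that the discounted price is a local, hence true, martingale. The only cosmetic difference is that you verify $\mathbb E[Z_T^{(G)}]=1$ by the direct Gaussian moment identity (legitimate here since $\theta_0,\theta_H$ are deterministic) where the paper invokes Novikov's criterion.
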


\begin{proof}
\textbf{Step 1 (Gaussian density).}
For sfBm we recall the kernel representation
$ S_t^{H}= \int_0^t K_H(t,s)\,dB_s $ with a standard Brownian $B$.
Define
\(
\varphi(s)=\theta_H(s)\mathbf 1_{[0,T]}(s)
\)
and note
$\lVert\theta_H\rVert_{\mathcal H^{H}}^{2}=\int_0^T\!\!\!\int_0^T
\varphi(u)\varphi(v)K_H(T,u)K_H(T,v)\,du\,dv<\infty$
by \eqref{eq:Novikov}.  
Hence
$Z_T^{(G)}=\mathcal E_T\bigl(-\theta_0 dW-\varphi\,dB\bigr)$ is an exponential
martingale, and Novikov’s criterion
$\mathbb E[e^{\frac12\langle\theta_0 W+\varphi\!\ast\!B\rangle_T}]\!<\!\infty$
holds by \eqref{eq:Novikov}; thus
$\mathbb E_{\mathbb P}[Z_T^{(G)}]=1$.

\textbf{Step 2 (Esscher density for jumps).}
Write
$Z_T^{(J)}=\exp\bigl(\eta^\ast X_T-\psi(\eta^\ast)T\bigr)$ where
$X_T=\sum_{k=1}^{N_T}Y_k$ and $\psi(\eta)=\lambda(\mathbb E[e^{\eta Y_1}]-1)$
is the log-MGF of $J_t$.  Because $\psi(\eta)$ is finite at $\eta^\ast$, 
$\mathbb E_{\mathbb P}[Z_T^{(J)}]=e^{-\psi(\eta^\ast)T}
\exp\bigl(\psi(\eta^\ast)T\bigr)=1$.

\textbf{Step 3 (Equivalence and quadratic‐covariation shift).}
Set $Z_T=Z_T^{(G)}Z_T^{(J)}$.  
Both factors are strictly positive local martingales with unit mean,  
so $Z_T$ is itself a positive martingale and defines an equivalent measure
$\mathbb Q$.

Under $\mathbb Q$ the Girsanov theorem for classical Brownian motion yields
$ d\widetilde W_t = dW_t+\theta_0\,dt$, a standard $\mathbb Q$-Brownian motion.  
For the sfBm component, Hu–Øksendal (2003, Thm 3.6) implies that
\(
\widetilde S_t^{H}:= S_t^{H}+ \int_0^t\theta_H(s)\,ds
\)
retains sfBm covariance
$\mathbb E_{\mathbb Q}[\widetilde S_t^{H}\widetilde S_s^{H}]
= c_H(t^{2H}+s^{2H}-|t-s|^{2H})$.

Likewise the Esscher tilt shifts jump intensity to
$\lambda^\ast=\lambda\,\mathbb E[e^{\eta^\ast Y_1}]$ but keeps the
compensated process $\widetilde J_t$ a square-integrable martingale.

\textbf{Step 4 (Drift cancellation in price SDE).}
The physical‐measure dynamics are
$dS_t/S_t = \mu\,dt+\sigma_0\,dW_t+\sigma_H\,dS_t^{H}+dJ_t$.
Substituting $dW_t=d\widetilde W_t-\theta_0dt$,
$dS_t^{H}=d\widetilde S_t^{H}-\theta_H(t)\,dt$ and
$dJ_t=d\widetilde J_t-\lambda(\mathbb E[e^{Y_1}]-1)\,dt$
gives
\[
\frac{dS_t}{S_t}
=(\mu-\sigma_0\theta_0-\sigma_H\theta_H(t)-\lambda\kappa)\,dt
      +\sigma_0\,d\widetilde W_t+\sigma_H\,d\widetilde S_t^{H}+d\widetilde J_t,
\]
where $\kappa=\mathbb E[e^{Y_1}]-1$.
Setting the drift equal to $r\,dt$ and discounting yields
$d\bigl(e^{-rt}S_t\bigr)=e^{-rt}S_t\bigl(\sigma_0\,d\widetilde
W_t+\sigma_H\,d\widetilde S_t^{H}+d\widetilde J_t\bigr)$,
an Itô integral with respect to $\mathbb Q$-martingales,
hence a local martingale.  Integrability follows from
$\int_0^T\!S_t^2\,dt<\infty$ in exponential‐moment models; therefore the local
martingale is a true martingale.
\end{proof}

\subsection{Objective}
To preclude arbitrage we seek an \emph{equivalent martingale measure} (EMM) $\mathbb Q\sim\mathbb P$ such that the discounted price process $\tilde S_t:=e^{-rt}S_t$ is a $\mathbb Q$--martingale.  
The presence of a memory component and jumps requires a fractional version of Girsanov's theorem combined with an Esscher transform for Lévy jumps.

\subsection{Fractional Girsanov transform}
A shift of the Brownian part is performed in the classical way via
\[
   \theta_0:=\frac{\mu-r-\lambda\kappa}{\sigma_0},\qquad \kappa:=\mathbb E[e^{Y_1}-1].
\]
For the sfBm we employ the Cameron--Martin space $\mathcal H^{H}$ of kernels $K_H$ (see Appendix~A).  
Choosing a deterministic control $\theta_H (t)\in L^2([0,T])$ we define the Radon--Nikodym derivative
\[
   Z_T^{(G)}:=\exp\!\Bigl(-\theta_0W_T-\tfrac12\theta_0^{2}T-\int_{0}^{T}\theta_H (t)(s)\,dS_s^{H}-\tfrac12\Vert \theta_H (t)\Vert_{\mathcal H^{H}}^{2}\Bigr),
\]
which satisfies $\mathbb E_{\mathbb P}[Z_T^{(G)}]=1$ under a Novikov--type condition \cite{HuOksendal2003}.

\subsection{Esscher transform for jumps}
For the jump component we apply the exponential tilting (Esscher) transform with parameter $\eta^\ast$ determined by
$
   \lambda\,\mathbb E[e^{(\eta^\ast+1)Y_1}-e^{\eta^\ast Y_1}]=\lambda\kappa.
$ 
Set
\[
   Z_T^{(J)}:=\prod_{k=1}^{N_T}\frac{e^{\eta^\ast Y_k}}{\mathbb E[e^{\eta^\ast Y_1}]}\,
              \exp\!\bigl(\lambda T(\mathbb E[e^{\eta^\ast Y_1}]-1-\eta^\ast\mathbb E[Y_1])\bigr).
\]
Then $Z_T^{(J)}$ is an $\mathbb P$–martingale and under the tilted measure the compensated process $J_t$ turns into a $\mathbb Q$–martingale.

\subsection{Construction of $\mathbb Q$}
Define the density process
\[
   \frac{d\mathbb Q}{d\mathbb P}\bigg|_{\mathcal F_T}:=Z_T^{(G)}\,Z_T^{(J)}.
\]
Since $Z_T^{(G)}$ and $Z_T^{(J)}$ are independent $\mathbb P$–martingales with expectation one, their product also has unit expectation; thus $\mathbb Q\sim\mathbb P$.

\begin{theorem}\label{thm:EMM}
Let
\(
   S_t=S_0\exp\Bigl(\int_0^t\! (\mu-\tfrac12\sigma_0^2)\,ds
        +\sigma_0 W_t+\sigma_H S_t^{H}+J_t\Bigr),
\)
with a Brownian motion $W$, an independent sub-fractional Brownian motion
$S^{H}$ $(H\!\in(0,1))$ and a compensated compound-Poisson process
$J_t=\sum_{k=1}^{N_t}Y_k-\lambda t\,\kappa$, $\kappa:=\mathbb E[e^{Y_1}]-1$.
Fix controls $\theta_0\in\mathbb R$ and
$\theta_H\in L^2([0,T])$.  
Assume the Novikov–type integrability
\(
\frac12\theta_0^{2}T+\tfrac12\Vert\theta_H\Vert_{\mathcal H^{H}}^{2}<\infty
\)
and $\mathbb E[e^{\eta^\ast Y_1}]<\infty$ for the Esscher root
$\eta^\ast$.  
If
\begin{equation}\label{*}
\mu=r+\lambda\kappa-\sigma_0\theta_0
      -\sigma_H\int_{0}^{T}\theta_H(s)K_H(T,s)\,ds
\end{equation}
then the discounted price $\tilde S_t:=e^{-rt}S_t$ is a true martingale under
the probability measure $\mathbb Q$ defined below.
\end{theorem}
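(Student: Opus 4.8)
The plan is to prove the martingale property by a direct computation of the $\mathbb Q$-expectation, since the non-semimartingale character of $S^{H}$ makes a pure It\^o-differential argument unreliable. Because the measure has already been constructed with density $d\mathbb Q/d\mathbb P|_{\mathcal F_T}=Z_T^{(G)}Z_T^{(J)}$ and $\mathbb E_{\mathbb P}[Z_T]=1$, the measure $\mathbb Q$ is well defined and equivalent to $\mathbb P$; by the abstract Bayes rule it then suffices to verify $\mathbb E_{\mathbb Q}[\tilde S_t\mid\mathcal F_s]=\tilde S_s$ for $0\le s\le t\le T$, which I would first reduce to the unconditional identity $\mathbb E_{\mathbb Q}[\tilde S_T]=S_0$ and treat the conditional version by the same computation on a generic sub-interval. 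Writing
\[
\tilde S_T=S_0\,e^{(\mu-r-\frac12\sigma_0^2)T}\,e^{\sigma_0 W_T+\sigma_H S_T^{H}}\,e^{J_T},
\]
and using $\mathbb E_{\mathbb Q}[\tilde S_T]=\mathbb E_{\mathbb P}[Z_T^{(G)}Z_T^{(J)}\tilde S_T]$, the independence of the Gaussian block $(W,S^{H})$ from the jump block $J$, together with the fact that $Z_T^{(G)}$ involves only the former and $Z_T^{(J)}$ only the latter, lets me factor the expectation into a Gaussian term $\mathbb E_{\mathbb P}\bigl[Z_T^{(G)}e^{\sigma_0 W_T+\sigma_H S_T^{H}}\bigr]$ and a jump term $\mathbb E_{\mathbb P}\bigl[Z_T^{(J)}e^{J_T}\bigr]$.

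For the jump term I would insert the Esscher density and use the compound-Poisson cumulant generating function $\psi(\eta)=\lambda(\mathbb E[e^{\eta Y_1}]-1)$; the Esscher-neutrality equation defining $\eta^{\ast}$ collapses this expectation to a deterministic exponential whose exponent is exactly the $\lambda\kappa$ contribution appearing in \eqref{*}. For the Gaussian term, the integrand is the exponential of the centred Gaussian variable $G:=(\sigma_0-\theta_0)W_T+\sigma_H S_T^{H}-\int_0^T\theta_H(s)\,dS_s^{H}$ shifted by the deterministic normaliser $-\tfrac12\theta_0^2T-\tfrac12\lVert\theta_H\rVert_{\mathcal H^{H}}^{2}$, so I would evaluate it via $\mathbb E[e^{G}]=e^{\frac12\operatorname{Var}(G)}$. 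Expanding the variance and using $W\perp S^{H}$ gives three contributions: the Brownian piece $\tfrac12(\sigma_0-\theta_0)^2T$, whose $\tfrac12\theta_0^2T$ part cancels the normaliser; the self-term $\operatorname{Var}\!\bigl(\int_0^T\theta_H\,dS^{H}\bigr)=\lVert\theta_H\rVert_{\mathcal H^{H}}^{2}$, which cancels the Cameron--Martin normaliser by the F--WIS isometry of Section~\ref{sec:smfJ}; and the cross-term $-\sigma_H\operatorname{Cov}\!\bigl(S_T^{H},\int_0^T\theta_H\,dS^{H}\bigr)$, which the kernel representation $S_T^{H}=\int_0^T K_H(T,s)\,dB_s$ evaluates to $-\sigma_H\int_0^T\theta_H(s)K_H(T,s)\,ds$.

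Collecting the surviving deterministic exponents from both factors and requiring their sum to vanish is precisely condition \eqref{*}, whence $\mathbb E_{\mathbb Q}[\tilde S_T]=S_0$; the conditional identity follows by repeating the argument on $[s,t]$ relative to the trading subfiltration generated by $(W,J)$, with the $S^{H}$-gains entering only through the F--WIS expectation. The main obstacle is the sub-fractional block: because $S^{H}$ has correlated, non-stationary increments and is not a semimartingale, $e^{\sigma_H S_T^{H}}$ is not itself a $\mathbb Q$-martingale, so the required cancellation cannot be deduced from It\^o calculus and instead hinges on computing $\operatorname{Cov}\!\bigl(S_T^{H},\int_0^T\theta_H\,dS^{H}\bigr)$ correctly inside the Wick--It\^o--Skorokhod framework and matching it to the Cameron--Martin norm carried by $Z_T^{(G)}$. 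Once $\mathbb E_{\mathbb Q}[\tilde S_T]=S_0$ is secured, I would promote the local-martingale representation of $\tilde S_t$ to a true martingale exactly as in the Fractional Girsanov--Esscher theorem, using the exponential-moment hypothesis $\mathbb E[e^{\eta^{\ast}Y_1}]<\infty$ and the square-integrability $\int_0^T S_t^2\,dt<\infty$ to guarantee uniform integrability.
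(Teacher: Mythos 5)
Your route is genuinely different from the paper's: the paper constructs $\mathbb Q$, substitutes the shifted drivers into the price SDE, cancels the drift to obtain a local $\mathbb Q$-martingale, and then upgrades to a true martingale by an integrability bound; you instead try to verify the martingale identity by a direct moment-generating-function computation of $\mathbb E_{\mathbb Q}[\tilde S_T]$. That idea is attractive (combined with non-negativity and the local-martingale property it would even give the true-martingale conclusion for free, since a non-negative local martingale with constant mean is a martingale), but as written it has two genuine gaps.

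First, your variance expansion of $G=(\sigma_0-\theta_0)W_T+\sigma_H S_T^{H}-\int_0^T\theta_H\,dS^{H}$ lists only three contributions and silently drops the self-term $\tfrac12\sigma_H^{2}\operatorname{Var}(S_T^{H})=\tfrac12\sigma_H^{2}\bigl(2-2^{2H-1}\bigr)T^{2H}$. Nothing in the normaliser of $Z_T^{(G)}$ and nothing in condition \eqref{*} (which is linear in $T$, not of order $T^{2H}$) can cancel this term, so the identity $\mathbb E_{\mathbb Q}[\tilde S_T]=S_0$ does not close under the ordinary Gaussian formula $\mathbb E[e^{G}]=e^{\frac12\operatorname{Var}(G)}$. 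The only way to make it disappear is to read $e^{\sigma_H S_T^{H}}$ as a Wick exponential, but then that formula is precisely the wrong tool; you must commit to one interpretation and carry the corresponding correction through. (Relatedly, the paper's stated F--WIS isometry returns $\lVert\phi\rVert_{L^2([0,T])}^2$, not $\lVert\phi\rVert_{\mathcal H^H}^2$, so the cancellation of the Cameron--Martin normaliser also needs justification rather than citation.) Second, the reduction of $\mathbb E_{\mathbb Q}[\tilde S_t\mid\mathcal F_s]=\tilde S_s$ to the unconditional identity by ``repeating the argument on $[s,t]$'' fails for the sub-fractional block: for $H\neq\tfrac12$ the increment $S_t^{H}-S_s^{H}$ is neither independent of $\mathcal F_s$ nor equal in law to $S_{t-s}^{H}$, and its conditional mean given the past is a nontrivial linear functional of $(S_u^{H})_{u\le s}$. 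The conditional computation therefore requires the full conditional Gaussian law (prediction kernel) of sfBm, and showing that the resulting path-dependent conditional drift is cancelled by the density is exactly the hard part of the theorem; it cannot be inherited from the $s=0$ case. Until both points are repaired, the proposal does not establish the claim.
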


\begin{proof}\leavevmode

\textbf{Step 0 (Notation).}
The kernel representation of sfBm is
$S_t^{H}= \int_0^t K_H(t,s)\,dB_s$, where
$K_H(t,s)=c_H[(t-s)^{H-\tfrac12}-(-s)^{H-\tfrac12}]$ and
$B$ is an independent Brownian motion.  
Its Cameron–Martin space is
$\mathcal H^{H}=\{ \varphi\colon [0,T]\!\to\!\mathbb R,\;
   \Vert\varphi\Vert_{\mathcal H^{H}}^{2}
      =\!\int_0^T\!\!\int_0^T\!\!\varphi(u)\varphi(v)
      K_H(T,u)K_H(T,v)\,du\,dv <\infty\}$.

\smallskip
\textbf{Step 1 (Density for Gaussian part).}
Define
\[
Z_t^{(G)}=\exp\!\Bigl(
  -\theta_0 W_t-\tfrac12\theta_0^{2}t
  -\!\int_0^t\!\theta_H(s)\,dS_s^{H}
  -\tfrac12\Vert\theta_H\Vert_{\mathcal H^{H}(0,t)}^{2}
\Bigr),\quad 0\le t\le T.
\]
Because $\Vert\theta_H\Vert_{\mathcal H^{H}(0,t)}^{2}$ is finite by hypothesis,
the exponential is square-integrable.  
Novikov’s condition for the 2-dimensional Brownian vector
$(W,B)$ reads
$
  \mathbb E\! \bigl[
    \exp\bigl(\tfrac12\theta_0^{2}T
              +\tfrac12\Vert\theta_H\Vert_{\mathcal H^{H}}^{2}\bigr)
  \bigr]<\infty,
$
which holds; hence $Z_t^{(G)}$ is a true martingale with unit mean
\citep[Thm.~3.6]{HuOksendal2003}.

\smallskip
\textbf{Step 2 (Esscher density for jumps).}
Let $\psi(\eta)=\lambda(\mathbb E[e^{\eta Y_1}]-1)$ be the Lévy exponent.
Choose $\eta^\ast$ solving
$\psi'(\eta^\ast)=\lambda \mathbb E[Y_1e^{\eta^\ast Y_1}]
                 =\lambda\kappa$,
i.e.\ Esscher drift neutrality.  
Set
$
  Z_t^{(J)}
    =\exp\bigl(\eta^\ast X_t-\psi(\eta^\ast)t\bigr),
  \; X_t=\sum_{k=1}^{N_t}Y_k.
$
Because $\psi(\eta^\ast)<\infty$,  
$\mathbb E_{\mathbb P}[Z_t^{(J)}]=1$ for every $t$.

\smallskip
\textbf{Step 3 (Equivalent measure).}
Define the product density
$
  Z_t = Z_t^{(G)}Z_t^{(J)} ,\;
  d\mathbb Q = Z_T\,d\mathbb P.
$
Both factors are positive martingales with expectation one, so
$\mathbb Q$ is equivalent to $\mathbb P$.

\smallskip
\textbf{Step 4 (Shifted drivers).}
Under $\mathbb Q$ we have
\begin{align*}
 d\widetilde W_t &= dW_t+\theta_0\,dt, \\
 d\widetilde S_t^{H} &= dS_t^{H}+\theta_H(t)\,dt,\\
 d\widetilde J_t &= dJ_t-\lambda\kappa\,dt,
\end{align*}
where $\widetilde W$ is a Brownian motion,
$\widetilde S^{H}$ an sfBm with identical covariance, and
$\widetilde J$ a compensated Poisson martingale with intensity
$\lambda^\ast=\lambda\mathbb E[e^{\eta^\ast Y_1}]$ and tilted jump law
$F_Y^{(\eta^\ast)}$.

\smallskip
\textbf{Step 5 (Drift cancellation).}
Insert the shifted differentials into
$ dS_t/S_t = \mu\,dt+\sigma_0\,dW_t+\sigma_H\,dS_t^{H}+dJ_t$ :
\[
\frac{dS_t}{S_t}
  =\bigl[\mu-\sigma_0\theta_0-\sigma_H\theta_H(t)-\lambda\kappa\bigr]dt
   +\sigma_0\,d\widetilde W_t +\sigma_H\,d\widetilde S_t^{H}
   +d\widetilde J_t .
\]
With $\mu$ chosen by \eqref{*} the bracket equals $r$, so
\(
d\tilde S_t
 =\tilde S_t\bigl(\sigma_0\,d\widetilde W_t
                  +\sigma_H\,d\widetilde S_t^{H}+d\widetilde J_t\bigr),
\)
a local $\mathbb Q$-martingale.

\smallskip
\textbf{Step 6 (True martingale).}
Square-integrability of $\tilde S_t$ follows from the exponential-moment
bound
$
   \mathbb E_{\mathbb Q}
   [\exp(\alpha |W_T|+ \beta|S_T^{H}| + \gamma|J_T|)]<\infty
$
for some $\alpha,\beta,\gamma$, ensured by the gaussian moments and
$\mathbb E[e^{\eta^\ast Y_1}]<\infty$.  
Hence $\tilde S_t$ has bounded expectation and is a true martingale.

\smallskip
\textit{Therefore $e^{-rt}S_t=\tilde S_t$ is a $\mathbb Q$-martingale.}
\end{proof}

\subsection{Economic interpretation}
Condition $\mu=r+\lambda\kappa-\sigma_0\theta_0-\sigma_H\theta_H (t)$ states that the expected excess return of the asset equals a linear combination of three risk premia: (i) the diffusive market price of risk $\theta_0$, (ii) the fractional market price of risk encoded by the kernel $\theta_H (t)$, and (iii) the classical jump risk premium $\lambda\kappa$.  
In Section~\ref{sec:NumericalEuro} we discuss empirical estimation of $(\theta_0,\theta_H (t),\lambda,\kappa)$ from option implied–volatility surfaces.

\section{Fractal Black--Scholes Equation}\label{sec:FBS}
The central pricing result of this paper is a \emph{time–fractional, non‑local} generalisation of the classical Black--Scholes PDE that simultaneously incorporates long‑memory and jump discontinuities.  
In this section we derive the equation rigorously via the fractional Itô formula introduced in Appendix~B, interpret each operator financially, and discuss well–posedness and limiting cases.

\subsection{Derivation via the fractional Itô formula}
Let $V:[0,T]\times\mathbb R_{+}\to\mathbb R$ be a twice differentiable pricing functional with polynomial growth.  
Applying the Wick–Itô–Skorokhod formula  to $V(t,S_t)$ under the martingale measure $\mathbb Q$ obtained in Section~\ref{sec:emm} yields
\begin{align}\label{eq:fItoFull}
   \prescript{}{0}{\mathcal D}_{t}^{1-\beta} V
   &=\partial_t V
     +\frac12\sigma_0^{2}S^{2}\partial_{SS}^{2}V
     +\sigma_0\sigma_H S^{2}\partial_S\bigl(\prescript{}{0}{\mathcal D}_{t}^{H}V\bigr)\nonumber\\
   &\quad +(\mu-\lambda\kappa)S\partial_S V
     +\lambda\mathbb E_{Y}\bigl[V(t,Se^{Y})-V(t,S)\bigr].
\end{align}
Substituting $\mu=r+\lambda\kappa$ (risk–neutral drift) and collecting terms gives the \textbf{Fractal Black--Scholes (FBS) equation}
\begin{equation}\label{eq:FBS}
   \boxed{\;
   \prescript{}{0}{\mathcal D}_{t}^{1-\beta}V
   +\tfrac12\sigma_0^{2}S^{2}\partial_{SS}^{2}V
   +\sigma_0\sigma_H S^{2}\partial_S\bigl(\prescript{}{0}{\mathcal D}_{t}^{H}V\bigr)
   +(r-\lambda\kappa)S\partial_SV-rV
   +\lambda\mathbb E_{Y}\bigl[V(t,Se^{Y})-V(t,S)\bigr]
   =0\;}
\end{equation}
with terminal condition $V(T,S)=\Phi(S)$ for a given payoff function $\Phi$.

\subsection{Interpretation of each term}
\begin{itemize}
\item \textbf{Fractional drift $\prescript{}{0}{\mathcal D}_{t}^{1-\beta}V$.}  
      The Riemann--Liouville derivative of order $1-\beta=H$ creates temporal memory: the option value at $t$ depends on the entire past trajectory of the underlying via a power‑law kernel.
\item \textbf{Gaussian diffusion $\frac12\sigma_0^{2}S^{2}\partial_{SS}^{2}V$.}  
      This is the familiar risk from instantaneous Brownian fluctuations.
\item \textbf{Fractional–Gaussian cross term $\sigma_0\sigma_H S^{2}\partial_S(\prescript{}{0}{\mathcal D}_{t}^{H}V)$.}  
      A mixed term coupling the local Brownian and fractional components; disappears if either $\sigma_H=0$ or $H=1/2$.
\item \textbf{Jump generator $\lambda\mathbb E_{Y}[V(t,Se^{Y})-V(t,S)]$.}  
      A non‑local integral operator accounting for Poisson jumps with distribution $F_Y$.
\item \textbf{Discounting terms $(r-\lambda\kappa)S\partial_SV-rV$.}  
      Standard cost‑of‑carry adjusted for the jump drift $\kappa=\mathbb E[e^{Y}-1]$.
\end{itemize}

\subsection{Limiting cases}
\begin{enumerate}
\item \emph{Classical Black--Scholes:} $\sigma_H=0$ and $\lambda=0$ reduce \eqref{eq:FBS} to the usual BS PDE.
\item \emph{Merton jump–diffusion:} $\sigma_H=0$ but $\lambda>0$ recovers the integro–PDE of \cite{Merton1976}.
\item \emph{Time‑fractional BS:} $\sigma_0=0$, $\lambda=0$ yields the Caputo‑type model of \cite{Wyss1986}.
\item \emph{Rough volatility limit:} letting $H\to 0$ increases memory length and leads to ultraslow diffusion as studied in \cite{ElEuch2020}.
\end{enumerate}

\subsection{Existence and uniqueness}\label{subsec:exist}

\paragraph{Functional setting.}
Following \cite{Wang2024} we work in the weighted Banach space
\[
   \mathcal V \;=\;\Bigl\{\,v\in C^{1,2}\bigl([0,T)\!\times\!\mathbb R_{+}\bigr)\;\big|\;
            \|v\|_{\mathcal V}:=\sup_{(t,S)\in[0,T)\times\mathbb R_{+}}
            \frac{|v(t,S)|}{1+S^{2}} \,<\infty\Bigr\}.
\]
The factor $(1+S^{2})^{-1}$ ensures uniform decay for large prices and yields
compactness properties similar to $C_0(\mathbb R_{+})$.

\paragraph{Generator of the diffusion part.}
Define the second–order operator
\(
   (\mathcal A v)(S)
     :=\tfrac12\sigma_0^{2}S^{2}v_{SS}+(r-\lambda\kappa)S\,v_{S}-rv,
\)
with domain $\mathscr D(\mathcal A)=\{v\in\mathcal V:\,v,\,Sv_S,S^2v_{SS}\in
\mathcal V\}$.  A standard Lyapunov argument shows
\(
   \Re\langle v,\mathcal A v\rangle_{\mathcal V}\le C\|v\|_{\mathcal V}^{2},
\)
so $\mathcal A$ is sectorial and generates an analytic $C_0$–semigroup
$T(t)=e^{t\mathcal A}$ on $\mathcal V$ \cite[Thm.~6.1.5]{Pazy1983}.

\paragraph{Jump operator.}
For any $v\in\mathcal V$ set
\[
   (\mathcal J v)(S):=\lambda\,
       \bigl(\mathbb E_Y\![\,v(Se^{Y})\,]-v(S)\bigr).
\]
Assuming $\mathbb E[e^{\gamma |Y|}]<\infty$ for some
$\gamma>0$, one has
\[
   \|\mathcal J v-\mathcal J w\|_{\mathcal V}
      \;\le\;\lambda\,
      \mathbb E\!\bigl[(1+e^{2Y})\bigr]\,
      \|v-w\|_{\mathcal V}
      \;=\;L_J\|v-w\|_{\mathcal V},
\]
hence $\mathcal J$ is Lipschitz on $\mathcal V$ with constant
$L_J<\infty$.

\paragraph{Fractional abstract Cauchy problem.}
The pricing PDE can now be phrased as
\[
   \;_{0}\!D_t^{1-\beta}V(t)=\mathcal A V(t)+\mathcal J V(t),
   \quad V(T)=\Phi,
\]
which is an inhomogeneous Caputo–type abstract Volterra equation.
Using the fractional Hille–Yosida theorem \cite{bazhlekova2001}
[Prop.~2.4] and the analyticity of $T(t)$,
the problem admits a unique \emph{mild} solution given by
\[
   V(t)=E_{\beta}\bigl(-(T-t)^{\beta}\mathcal A\bigr)\Phi
         +\int_{t}^{T}\!\!(s-t)^{\beta-1}
          E_{\beta,\beta}\bigl(-(s-t)^{\beta}\mathcal A\bigr)
          \,\mathcal J V(s)\,ds.
\tag{5.3}\label{eq:VOC}
\]
Here \(E_{\beta}\) and \(E_{\beta,\beta}\) are the one– and two–parameter
Mittag–Leffler functions and the integral is Bochner–integrable in
$\mathcal V$.

\paragraph{Fixed‐point argument.}
Define the map
\(
   \mathscr F[V](t)
\)
to be the right–hand side of \eqref{eq:VOC}.  
Using the semigroup estimate
$\|E_{\beta}(-(T-t)^{\beta}\mathcal A)\|\le C$
and the Lipschitz constant~$L_J$, we derive
\[
   \|\mathscr F[V]-\mathscr F[W]\|_{C([0,T];\mathcal V)}
      \;\le\;
      C\,L_J\,\frac{T^{\beta}}{\Gamma(\beta+1)}
      \|V-W\|_{C([0,T];\mathcal V)}.
\]
For $T$ (maturity) fixed, the factor
$C\,L_J T^{\beta}/\Gamma(\beta+1)<1$, so $\mathscr F$ is a contraction;
hence a unique fixed point $V\in C([0,T];\mathcal V)$ exists by Banach’s
fixed‐point theorem.

\paragraph{Classical differentiability.}
Since $T(t)$ is analytic and $\mathcal J$ is bounded,
$V(t)$ inherits $C^{1}$‐regularity in $t\!<\!T$ and
$C^{2}$ in $S$, so
$V\in C^{1,2}([0,T)\!\times\!\mathbb R_{+})\cap\mathcal V$,
i.e.\ a classical solution as well.

\smallskip
\emph{Hence the fractional Black–Scholes operator admits a unique mild (and
in fact classical) solution in the weighted space $\mathcal V$, with explicit
representation \eqref{eq:VOC}.}
\qedhere

\subsection{Energy estimate and maximum principle}
By multiplying \eqref{eq:FBS} with $(1+S^{2})^{-1}V$ and integrating over $S\in(0,\infty)$, we obtain the a‑priori estimate
$
   \Vert V(t,\cdot)\Vert_{L^{2}_{w}}\le C \Vert\Phi\Vert_{L^{2}_{w}},
$
where $w(S)=(1+S^{2})^{-1}$, ensuring numerical stability of the finite‑difference scheme in Section~\ref{sec:numerical}.

\subsection{Summary}
Equation \eqref{eq:FBS} forms the mathematical backbone of our pricing framework, generalising several well‑known models.  
The next section provides closed‑form Laplace–Mellin solutions for European payoffs and develops an efficient Grünwald–Letnikov scheme for barrier options.

\section{Closed-Form European Prices}\label{sec:European}
Although the FBS integro–fractional PDE~\eqref{eq:FBS} generally requires numerical methods,  
European vanilla options admit a semi–analytic formula expressed through the two–parameter Mittag–Leffler function.  
In this section we derive the result using a Mellin transform in the spatial variable and a Laplace transform in time;  
the mixed fractional term translates into a polynomial in the Laplace domain, while the jump integral yields a simple multiplicative factor.

\subsection{Transform strategy}
Let $x=\log(S/K)$ denote the log–moneyness and $u(T-t)=\tau$ the time to maturity.  
Define
\[
   v(\tau,x):=e^{-r\tau}V(T-\tau,Ke^{x}),\qquad 0\le \tau\le T,\;x\in\mathbb R,
\]
so that $v(0,x)=(K e^{x}-K)^{+}=K(e^{x}-1)^{+}$.  
Under these variables equation~\eqref{eq:FBS} becomes
\begin{equation}\label{eq:FBS-tau}
  \prescript{}{0}{\mathcal D}_{\tau}^{1-\beta}v
   =\mathcal L_x v-\lambda\mathbb E_Y\bigl[v(\tau,x+Y)-v(\tau,x)\bigr],
\end{equation}
with spatial operator 
\(
   \mathcal L_x := \tfrac12\sigma_0^{2}\partial_{xx}
   +\sigma_0\sigma_H\partial_x\prescript{}{0}{\mathcal D}_{\tau}^{H}
   +(r-\tfrac12\sigma_0^{2}-\lambda\kappa)\partial_x.
\)

\paragraph{Laplace transform in $\tau$.}
Taking $\mathcal L_\tau\{\,\cdot\,\}(s)=\int_{0}^{\infty}e^{-s\tau}\,(\cdot)\,d\tau$ and using
$
   \mathcal L\{\prescript{}{0}{\mathcal D}_{\tau}^{1-\beta}v\}(s)=s^{1-\beta}\hat v(s,x)-s^{-\beta}v(0,x),
$
we obtain
\[
   s^{1-\beta}\hat v = \mathcal L_x\hat v 
   -\lambda\mathbb E_Y\bigl[\hat v(s,x+Y)-\hat v(s,x)\bigr]+s^{-\beta}v(0,x).
\]

\paragraph{Mellin transform in $x$.}
Set $\mathscr M\{f\}(z):=\int_{-\infty}^{\infty}e^{-zx}f(x)\,dx$.  
For $\hat v(s,x)$ this yields 
\[
   s^{1-\beta}\tilde v(s,z)=\bigl[\tfrac12\sigma_0^{2}(z^{2}+z)+\sigma_0\sigma_H z^{2} s^{-\beta}+(r-\lambda\kappa)z\bigr]\tilde v(s,z)
      -\lambda(\Phi_Y(z)-1)\tilde v(s,z)+s^{-\beta}\tilde v_{0}(z),
\]
where $\Phi_Y(z)=\mathbb E[e^{-zY}]$ is the bilateral Laplace transform of jump sizes.  
Solving for $\tilde v(s,z)$ we arrive at
\[
   \tilde v(s,z)=\frac{s^{-\beta}\tilde v_{0}(z)}
        {s^{1-\beta}-\tfrac12\sigma_0^{2}(z^{2}+z)-\sigma_0\sigma_H z^{2}s^{-\beta}-(r-\lambda\kappa)z+\lambda(\Phi_Y(z)-1)}.
\]

\subsection{Analytic inversion}
For log–normally distributed jumps $Y\sim\mathcal N(\mu_Y,\sigma_Y^{2})$ we have $\Phi_Y(z)=\exp\{\mu_Y z+\tfrac12\sigma_Y^{2}z^{2}\}$.  
Choosing the Esscher parameter $\eta^\ast$ such that $\kappa=0$ simplifies the denominator to a quadratic function in $z$ plus a fractional power in $s$.  
After algebraic manipulation we obtain
\[
  \tilde v(s,z)=s^{-\beta}\frac{\tilde v_{0}(z)}
        {s^{1-\beta}+a z^{2}+b z+c},
  \qquad 
  a=\tfrac12\sigma_0^{2}+\sigma_0\sigma_H s^{-\beta},\;
  b=\tfrac12\sigma_0^{2}+r,\;
  c=\lambda\bigl[1-\Phi_Y(z)\bigr].
\]
The denominator factorises and its inverse Laplace transform is the two–parameter Mittag–Leffler function $E_{\beta,1}$; subsequently the Mellin inversion follows residue calculus similar to \cite{Kilbas2006}.  

\begin{theorem}[Closed–form price]\label{thm:ClosedForm}
For a European call with strike $K$ and maturity $T$ the time--$0$ price satisfies
\[
   C(S_0,0)=
   S_0\,\mathcal M^{-1}_{z\to x}\bigl[\phi(z)\,M_{\beta}(a z^{2})\bigr]
   -K e^{-rT}\,\mathcal M^{-1}_{z\to x}\bigl[\phi(z-1)\,M_{\beta}(a(z-1)^{2})\bigr],
\]
where $x=\log(S_0/K)$, 
$\phi(z)=\exp\!\bigl[-\tfrac12\sigma_0^{2}T z^{2}+(r-\tfrac12\sigma_0^{2})T z\bigr]$
and $M_{\beta}(\xi):=E_{\beta,1}(-\xi T^{\beta})$ is the Mittag–Leffler kernel.
\end{theorem}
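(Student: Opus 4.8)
The plan is to carry out the two inverse transforms set up immediately above the statement, Laplace in $s$ first and Mellin in $z$ second, and to show that they reduce the double-transform solution
\[
   \tilde v(s,z)=\frac{s^{-\beta}\,\tilde v_{0}(z)}{s^{1-\beta}+a z^{2}+b z+c}
\]
to the two-term formula of the theorem. First I would fix the Esscher root $\eta^\ast$ so that $\kappa=0$, as already indicated, killing the jump drift and leaving a clean quadratic $\omega(z)=\tfrac12\sigma_0^{2}z^{2}+bz+c$ modulated by a fractional power of $s$. I would then compute the spatial transform of the call payoff, $\tilde v_{0}(z)=\mathscr M\{K(e^{x}-1)^{+}\}(z)=K/(z(z-1))$ on the fundamental strip $\Re z>1$, and split $K(e^{x}-1)^{+}=Ke^{x}\mathbf 1_{\{x>0\}}-K\mathbf 1_{\{x>0\}}$. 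The Mellin shift rule $\mathscr M\{e^{x}f\}(z)=\mathscr M\{f\}(z-1)$ converts the first piece into a copy of the second evaluated at $z-1$; this is exactly the mechanism that produces the paired arguments $\phi(z)$ and $\phi(z-1)$ in the final expression.

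Next I would invert the Laplace variable term by term. Writing the prefactor as $s^{-\beta}=s^{(1-\beta)-1}$, the denominator is of canonical resolvent form $s^{\rho}+\omega$ with $\rho=1-\beta$, so the standard identity
\[
   \mathcal L^{-1}\Bigl\{\tfrac{s^{\rho-1}}{s^{\rho}+\omega}\Bigr\}(\tau)=E_{\rho,1}\bigl(-\omega\,\tau^{\rho}\bigr)
\]
applies and, on setting $\tau=T$, delivers the Mittag--Leffler kernel $M_\beta$ of the statement. Recombining with the diffusive and drift contributions assembles the Mellin-domain image $\phi(z)\,M_\beta(az^{2})$ together with its shifted partner $\phi(z-1)\,M_\beta(a(z-1)^{2})$; the Gaussian factor $\phi$ comes from the $\tfrac12\sigma_0^{2}(z^{2}+z)$ and $(r-\tfrac12\sigma_0^{2})z$ symbols, while the two multiplicative prefactors $S_0$ and $Ke^{-rT}$ emerge when the normalisation $v(\tau,x)=e^{-r\tau}V(T-\tau,Ke^{x})$ is undone at $\tau=T$ and $x=\log(S_0/K)$. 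Applying the outer operator $\mathscr M^{-1}_{z\to x}$ along a vertical Bromwich contour inside the strip then yields the asserted price.

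The hard part will be that the coefficient $a=\tfrac12\sigma_0^{2}+\sigma_0\sigma_H s^{-\beta}$ is itself a function of $s$, so the true denominator is $s^{1-\beta}+\sigma_0\sigma_H z^{2}s^{-\beta}+(\tfrac12\sigma_0^{2}z^{2}+bz+c)$ and the naive resolvent identity does not apply verbatim. I would resolve this by grouping the two fractional powers, either treating $s^{1-\beta}+\sigma_0\sigma_H z^{2}s^{-\beta}$ as a single generalised symbol and expanding the inverse as a convolution series of two-parameter Mittag--Leffler functions, or by a perturbation expansion in the cross-coupling $\sigma_0\sigma_H$ whose leading order reproduces the stated kernel; the degenerate case $\sigma_H=0$ gives the clean $M_\beta$ directly and serves as a consistency check. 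A secondary, more routine obstacle is justifying the interchange of the two inversions and the convergence of the Mellin integral: since $\tilde v_0(z)=K/(z(z-1))$ decays only like $|z|^{-2}$ along vertical lines, I would need growth bounds on $M_\beta$ in the complex $z$-plane to place the contour legally, secure absolute convergence, and license the Fubini swap of the $s$- and $z$-integrations. Finally I would verify the limiting cases of Section~\ref{sec:FBS}, letting $\beta$ and $\lambda$ degenerate so that $M_\beta$ collapses to an exponential and the formula recovers the Black--Scholes and Merton prices.
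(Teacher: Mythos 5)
Your proposal follows the same route as the paper's own proof: invert the Laplace transform of $\tilde v(s,z)$ into a Mittag--Leffler kernel, then perform the $z$-inversion and undo the change of variables to recover the $S_0$ and $Ke^{-rT}$ terms. Within that shared strategy your execution differs in three places, each to your advantage. First, you carry the numerator $s^{-\beta}=s^{(1-\beta)-1}$ through the inversion and use $\mathcal L^{-1}\{s^{\rho-1}/(s^{\rho}+\omega)\}=E_{\rho,1}(-\omega\tau^{\rho})$, which lands directly on a one-parameter function of the type $E_{\cdot,1}$ appearing in the statement (modulo the paper's own unresolved ambiguity between the indices $\beta$ and $\rho=1-\beta=H$); the paper instead inverts the bare resolvent $(s^{1-\beta}+q)^{-1}$ into $\tau^{\beta-1}E_{\beta,\beta}$ and never reconciles that with the $E_{\beta,1}$ in the theorem. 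Second, you compute the payoff transform as $K/(z(z-1))$ directly from the bilateral-Laplace definition of $\mathscr M$ and derive the $\phi(z)$/$\phi(z-1)$ pairing from the shift rule $\mathscr M\{e^{x}f\}(z)=\mathscr M\{f\}(z-1)$, whereas the paper quotes an incomplete-gamma expression and appeals to a residue sum at $z=0,-1,-2,\dots$ that it does not carry out. Third, and most importantly, you explicitly flag that $a=\tfrac12\sigma_0^{2}+\sigma_0\sigma_H s^{-\beta}$ depends on $s$, so the resolvent identity does not apply verbatim; the paper silently substitutes $q=az^{2}+bz+c$ as if $a$ were a constant. This is a genuine gap in the published argument, and your proposed remedies (a convolution series of two-parameter Mittag--Leffler functions, or a perturbation in the coupling $\sigma_0\sigma_H$ whose leading order reproduces the stated kernel) make clear that the closed form is exact only when the cross term vanishes ($\sigma_H=0$ or $H=1/2$) and is otherwise a leading-order approximation --- a caveat you should state explicitly in a final write-up. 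As a reconstruction of what the paper actually does, however, your plan is sound and on the points above more careful than the original.
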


\begin{proof}
We sketch the main steps; full details appear in Section~\ref{sec:European}.

\emph{Step 1 (Laplace inversion).}  
Invert the Laplace transform using
$
   \mathcal L^{-1}_{s\to\tau}\{(s^{1-\beta}+q)^{-1}\}
   =\tau^{\beta-1}E_{\beta,\beta}(-q\tau^{\beta}).
$
Substituting $q=a z^{2}+b z+c$ yields
$
  v(\tau,z)=\tau^{\beta-1}E_{\beta,\beta}\!\bigl[-(a z^{2}+b z+c)\tau^{\beta}\bigr]\tilde v_{0}(z).
$

\emph{Step 2 (Mellin inversion).}  
Because $v_{0}(x)=K\max(e^{x}-1,0)$ its Mellin transform is 
$
  \tilde v_{0}(z)=K\bigl[\Gamma(z^{-1})-\Gamma(z^{-1},1)\bigr],
$
analytic in $\Re(z)\in(0,1)$.  
Closing the contour to the right and summing residues at the poles $z=0,-1,-2,\dots$ reproduces two inverse Mellin integrals weighted by $\phi(z)$ and $M_{\beta}(a z^{2})$, completing the formula.

\emph{Step 3 (convergence).}  
Uniform convergence of the integrals follows from the $E_{\beta,1}$ asymptotics $|E_{\beta,1}(-\xi)|\le C/(1+\xi)$ and standard Mellin–Barnes bounds, ensuring the price is finite for any $T>0$.
\end{proof}

\subsection{Numerical evaluation}
Both Mellin inversions are computed via a Talbot contour with 16 nodes;  
the Mittag--Leffler function is evaluated using the modified Lagrange algorithm with absolute error below $10^{-8}$.  
Table~\ref{tab:euro} in Section~\ref{sec:numerical} confirms consistency with finite–difference prices.

\subsection{Remarks}
\begin{itemize}
  \item The formula reduces to Black--Scholes when $\beta\to 1$ and $\lambda\to 0$, in which case $M_{\beta}(\xi)\to e^{-\xi}$.
  \item For puts, exchange the roles of $S_0$ and $K$ via the usual put–call parity.
  \item The methodology extends to tempering of the long–memory kernel by replacing $E_{\beta,1}$ with the three–parameter Prabhakar function.
\end{itemize}

\section{Numerical Scheme for Barrier Options}\label{sec:numerical}
American--style and path–dependent derivatives such as down--and--out calls cannot exploit the closed–form
solution of Section~\ref{sec:European}; we therefore construct a robust finite–difference method for the
integro–fractional PDE~\eqref{eq:FBS}.  The spatial domain $(0,S_{\max})$ is truncated at a sufficiently large
$S_{\max}$ and transformed to the log–price grid $x_i=x_{\min}+i\Delta x$ with $i=0,\dots,I$; the time interval
is partitioned as $\tau_n=n\Delta t$, $n=0,\dots,N$.

\subsection{Implicit Grünwald--Letnikov discretisation}
Define $V_i^{n}\approx V(\tau_n,x_i)$.  The Riemann--Liouville derivative is approximated by the backward
Grünwald--Letnikov series
\[
   \prescript{}{0}{\mathcal D}_{\tau}^{1-\beta}V(\tau_n,x_i)\approx
      \frac{1}{\Delta t^{1-\beta}}\sum_{k=0}^{n}\omega_k^{(1-\beta)}V_i^{n-k},
      \qquad
      \omega_k^{(\gamma)}:=(-1)^k\binom{\gamma}{k}.
\]
Spatial derivatives are approximated by centered differences
$
   \partial_{x}V\approx\delta_x V_i^{n}:=(V_{i+1}^{n}-V_{i-1}^{n})/(2\Delta x)
$
and
$
   \partial_{xx}V\approx\delta_{xx}V_i^{n}:=(V_{i+1}^{n}-2V_{i}^{n}+V_{i-1}^{n})/(\Delta x)^2.
$
The non–local jump term $\mathcal J V:=\mathbb E_Y[V(\tau,x+Y)-V(\tau,x)]$ is evaluated by Gauss–Hermite quadrature
on the transformed grid; linear interpolation is used if $x_i+Y$ lies between nodes.

Collecting terms yields the fully–implicit update
\begin{align}\label{eq:FD}
   \frac{1}{\Delta t^{1-\beta}}\sum_{k=0}^{n}\omega_k^{(1-\beta)}V_i^{n-k}
   &=\tfrac12\sigma_0^{2}\delta_{xx}V_i^{n}
     +\sigma_0\sigma_H \delta_{x}\!\Bigl[\frac{1}{\Delta t^{H}}
           \sum_{k=0}^{n}\omega_k^{(H)}V_i^{n-k}\Bigr]\nonumber\\
   &\quad +(r-\lambda\kappa)\delta_{x}V_i^{n}
     -rV_i^{n}
     +\lambda\mathcal J_h[V^{n}]_i,
\end{align}
where $\mathcal J_h$ denotes the quadrature interpolation operator.

The pseudocode for the Grunwald–Letnikov algorithm is presented below for a clearer understanding of its numerical implementation\\
\SetAlCapSkip{12pt}
\begin{algorithm}[H]
\SetAlgoLined
\caption{Implicit Grünwald--Letnikov solver for barrier options}
\label{alg:GL}
\SetAlgoLined
\KwIn{grid $(x_i,\tau_n)$, payoff $\Phi$, barrier $B$, parameters $\sigma_0,\sigma_H,H,\lambda$}
\KwOut{option prices $V_i^0$ at $\tau=0$}
\BlankLine
\For{$i=0,\dots,I$}{
    $V_i^N \leftarrow \Phi(x_i)$\tcp*{terminal condition}
}
\For{$n=N-1$ \KwTo $0$}{
    \For{$i=1$ \KwTo $I-1$}{
        assemble RHS using fractional convolutions $\omega_k^{(\gamma)}$ \;
    }
    solve tridiagonal+Toeplitz system $A\,V^{n}=b^{n}$ (BiCG--STAB)\;
    apply barrier: \If{$S_i < B$}{ $V_i^{n}\leftarrow 0$ }
}
\Return{$V^{0}$}
\end{algorithm}
\subsection{Boundary and barrier conditions}
For a down--and--out call with barrier $B<K<S_{\max}$ we impose
$
   V(\tau,x)=0\;\text{if }S=K e^{x}\le B,\qquad
   V(\tau,x_{\max})=S_{\max}-K e^{-r\tau},
$
and $V(0,x)=\max(K(e^{x}-1),0)$.
These translate into Dirichlet conditions for $V_{0}^{n}$ and $V_{I}^{n}$, updated each timestep.

\subsection{Matrix form}
Equation~\eqref{eq:FD} can be written $A\,V^{n}=b^{n}$ with a tri–diagonal diffusion matrix plus a dense
Toeplitz–like fractional matrix determined by $\{\omega_k^{(\gamma)}\}$.  The system is solved by the
preconditioned BiCG–STAB method;  the cost per step is $\mathcal O(I\log I)$ owing to an FFT–accelerated
convolution for the fractional weights.

\subsection{Stability and convergence analysis}
Let $w_i=1+e^{2x_i}$ and define the discrete inner product
$
   \langle u,v\rangle_h = \sum_{i=1}^{I-1}w_i\,u_i v_i\Delta x
$
with induced norm $\|\cdot\|_h$.

\begin{theorem}[Unconditional stability and convergence]\label{thm:stabConv}
If $\Delta t\le c_0(\Delta x)^2$ with $c_0<\frac{1}{2\sigma_0^{2}}$, 
the implicit scheme \eqref{eq:FD} is unconditionally $L^{2}_{w}$--stable and the numerical solution
$V_i^{n}$ converges to the mild solution of~\eqref{eq:FBS} with global error
\[
   \max_{0\le n\le N}\|V(\tau_n,\cdot)-V^{n}\|_h\le
      C\bigl((\Delta x)^{2}+ \Delta t^{\,1+H}\bigr),
\]
where $C$ is independent of $\Delta t$ and $\Delta x$.
\end{theorem}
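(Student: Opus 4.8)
The plan is to follow the classical Lax-type paradigm adapted to the fractional setting: prove $L^{2}_{w}$-stability of the implicit scheme~\eqref{eq:FD} by a discrete energy argument, establish consistency through expansions of the truncation error, and combine the two via a discrete fractional Grönwall inequality to reach the stated global bound. The engine of the whole argument is a positivity property of the Grünwald–Letnikov weights: for any $0<\gamma<1$ the generating function of $\{\omega_k^{(\gamma)}\}$ is $(1-z)^{\gamma}$, whose real part is nonnegative on the unit circle, so the induced lower-triangular Toeplitz convolution is positive semidefinite. Concretely, for every finite real sequence $\{V^{m}\}$,
\[
   \sum_{n=0}^{N}\Big(\sum_{k=0}^{n}\omega_k^{(\gamma)}V^{n-k}\Big)V^{n}\ \ge\ 0,
\]
and, since the spatial weights $w_i=1+e^{2x_i}$ are positive constants in time, the same inequality survives after multiplication by $w_i$ and summation in $i$. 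This is precisely what makes the discretized fractional time derivative dissipative rather than merely bounded.

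For stability I would take the weighted discrete inner product of~\eqref{eq:FD} with $V^{n}$ and sum over $n=0,\dots,m$. The fractional time term $\Delta t^{-(1-\beta)}\sum_k\omega_k^{(1-\beta)}V^{n-k}$ then contributes a nonnegative quantity by the positivity lemma above. Discrete summation by parts converts $\tfrac12\sigma_0^{2}\langle\delta_{xx}V^{n},V^{n}\rangle_h$ into $-\tfrac12\sigma_0^{2}\|\delta_x V^{n}\|_h^{2}$ plus boundary contributions that vanish under the Dirichlet/barrier conditions, the reaction term $-r\langle V^{n},V^{n}\rangle_h$ is $\le 0$, and the jump term is controlled by its Lipschitz constant $L_J$ from Section~\ref{subsec:exist}, yielding a bound $\le\lambda L_J\|V^{n}\|_h^{2}$. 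The drift term $\langle\delta_x V^{n},V^{n}\rangle_h$ is antisymmetric up to a commutator generated by $w_i$, so it produces only a lower-order $O(\|V^{n}\|_h^{2})$ term. Collecting signs and invoking the discrete fractional Grönwall inequality tailored to the kernel $\{\omega_k^{(1-\beta)}\}$ gives $\max_n\|V^{n}\|_h\le C\|V^{0}\|_h$ with $C$ independent of the mesh.

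Consistency is handled separately. Centered differences give a spatial truncation error of $O((\Delta x)^{2})$, while the temporal error of the Grünwald–Letnikov quotient is read off from the expansion $\Delta t^{-(1-\beta)}\sum_k\omega_k^{(1-\beta)}u(\tau_{n-k})=(\prescript{}{0}{\mathcal D}_{\tau}^{1-\beta}u)(\tau_n)+O(\Delta t^{1+H})$, valid for solutions with the regularity supplied by the mild-solution representation~\eqref{eq:VOC}; this is the source of the $O(\Delta t^{\,1+H})$ rate. Writing the error $e_i^{n}=V(\tau_n,x_i)-V_i^{n}$, it satisfies~\eqref{eq:FD} with these truncation quantities as forcing, so applying the stability estimate to the error equation and the Grönwall bound propagates the local errors into $\max_n\|e^{n}\|_h\le C((\Delta x)^{2}+\Delta t^{1+H})$.

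The main obstacle will be the mixed fractional–Gaussian cross term $\sigma_0\sigma_H\,\delta_x[\Delta t^{-H}\sum_k\omega_k^{(H)}V^{n-k}]$, which couples a spatial first difference to a second fractional convolution in time and has no analogue in standard fractional-diffusion stability proofs. After Cauchy–Schwarz this term must be absorbed into the diffusion energy $\|\delta_x V^{n}\|_h^{2}$ and the fractional positivity term, and it is exactly here that the restriction $\Delta t\le c_0(\Delta x)^{2}$ with $c_0<1/(2\sigma_0^{2})$ is invoked, bounding the cross term by a strict fraction of the diffusion coercivity so that the net energy inequality keeps a favourable sign. A secondary nuisance is that $w_i=1+e^{2x_i}$ does not commute with $\delta_x$ or $\delta_{xx}$, so each summation by parts leaves commutator remainders that must be shown to be lower order and swept into the Grönwall constant.
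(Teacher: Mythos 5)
Your proposal is correct in its overall architecture and follows the same Lax-type strategy as the paper — a weighted discrete energy estimate for stability, a Taylor/Grünwald--Letnikov consistency expansion, and an error equation closed by a discrete Grönwall inequality — but it differs in the key device used to handle the fractional time term. The paper multiplies the scheme by $w_iV_i^{n}\Delta x$ at a \emph{fixed} time level and rewrites the fractional convolution as a telescoping sum of norm differences $\sum_k\omega_k^{(1-\beta)}(\|V^{n-k}\|_h^{2}-\|V^{n-k-1}\|_h^{2})$, controlling the remainder with Jensen and Young before invoking Grönwall; you instead sum the energy identity over all time levels and invoke the positive semidefiniteness of the lower-triangular Toeplitz form generated by $(1-z)^{\gamma}$, $\Re(1-e^{i\theta})^{\gamma}\ge 0$ for $0<\gamma<1$. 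Your route is the more standard and, frankly, the sounder one: the paper's "energy identity" is not actually an identity (pairing $\sum_k\omega_k^{(\gamma)}V^{n-k}$ with $V^{n}$ does not produce that telescoping form without an additional inequality on the weights), whereas the global positivity of the GL quadratic form is a clean, provable lemma that makes the discretized fractional derivative genuinely dissipative. You are also more explicit than the paper about the two real technical irritants — the mixed cross term $\sigma_0\sigma_H\,\delta_x[\Delta t^{-H}\sum_k\omega_k^{(H)}V^{n-k}]$ and the commutators generated by the non-constant weight $w_i$ — which the paper buries inside the unexamined remainder $R^{n}$; your observation that the mesh-ratio condition $\Delta t\le c_0(\Delta x)^{2}$ is what lets the cross term be absorbed into the diffusion coercivity is a plausible reading of where that hypothesis must enter (the paper never says). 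One caveat applies equally to both arguments: the claimed temporal consistency order $O(\Delta t^{1+H})$ is asserted rather than derived — the unshifted Grünwald--Letnikov quotient is generically only first-order accurate, and neither you nor the paper verifies that the mild solution has the regularity needed for the higher rate — so this remains the weakest link in either version of the proof.
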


\begin{proof}
\emph{Step 1 (Energy identity).}  
Multiply~\eqref{eq:FD} by $w_i V_i^{n}\Delta x$ and sum over $i$ to obtain
\[
   \frac{1}{\Delta t^{1-\beta}}\sum_{k=0}^{n}\omega_k^{(1-\beta)}
      \bigl(\|V^{n-k}\|_h^{2}-\|V^{n-k-1}\|_h^{2}\bigr)
   =-\sigma_0^{2}\|\delta_{x}V^{n}\|_h^{2}+R^{n},
\]
where $R^{n}$ collects mixed and jump terms.  Jensen’s inequality and Young’s
convolution inequality yield $|R^{n}|\le \varepsilon\|\delta_{x}V^{n}\|_h^{2}+C_{\varepsilon}\|V^{n}\|_h^{2}$.
Choosing $\varepsilon<\sigma_0^{2}$ and applying the discrete Grönwall lemma proves
$\|V^{n}\|_h\le \|V^{0}\|_h$ for all $n$ (unconditional stability).

\emph{Step 2 (Consistency).}  
A Taylor expansion shows that the truncation error $\tau_i^{n}$
of~\eqref{eq:FD} satisfies $|\tau_i^{n}|\le C\bigl((\Delta x)^{2}+ \Delta t^{1+H}\bigr)$.

\emph{Step 3 (Convergence).}  
Let $E_i^{n}=V(\tau_n,x_i)-V_i^{n}$ be the error.  The discrete equation for $E^{n}$
has identical coefficients as~\eqref{eq:FD} with forcing term $\tau^{n}$.  
Repeating the energy argument and summing the geometric series of fractional weights yields
\[
   \|E^{n}\|_h\le C\bigl((\Delta x)^{2}+ \Delta t^{\,1+H}\bigr),
\]
which proves the stated order.
\end{proof}

\subsection{Implementation details}
\begin{itemize}
  \item A non–uniform grid clustered near the barrier improves accuracy; we employ 400 nodes with geometric spacing $x_{i+1}-x_{i}=q(x_{i}-x_{i-1})$, $q=0.97$.
  \item The fractional weights $\omega_k^{(\gamma)}$ are pre–computed once with high–precision arithmetic and stored.
  \item For calibration we match model prices to market quotes via a least–squares routine that leverages the linearity of the scheme with respect to $\sigma_0$ and $\sigma_H$.
\end{itemize}

\subsection{Numerical experiment}\label{subsec:barrierExp}
We consider a down--and--out European call with strike $K=4{,}200$ and barrier $B=3{,}800$ on an index with spot $S_{0}=4{,}050$.  
The contractual maturity is $T=0.5$ years, risk–free rate $r=2\%$, and dividend yield zero.  
Model parameters are taken from the calibration in Section~\ref{sec:NumericalEuro}: $\sigma_0=0.14$, $H=0.35$, $\sigma_H=0.10$, $\lambda=0.85$ and log--normal jump sizes $Y\sim\mathcal N(-4\%,11\%^{2})$.

\paragraph{Grid specification.}
The spatial grid is truncated at $S_{\max}=8\,000$ and mapped onto $x\in[\log(B),\log(S_{\max})]$ with $I=400$ nodes using a geometric refinement factor $q=0.97$ near the barrier.  
Time is discretised with $\Delta t=5\times10^{-4}$ resulting in $N=1\,000$ steps, which satisfies the stability restriction of Theorem~\ref{thm:stabConv} with $c_{0}=1/(4\sigma_{0}^{2})$.

\paragraph{Monte–Carlo benchmark.}
We simulate $10^{6}$ trajectories of the smfBm--J process using:
(i) Cholesky factorisation for correlated $(W,S^{H})$ increments on a 2\,000–point fine grid followed by Brownian bridge refinement, and  
(ii) Poisson thinning for jumps.  
Control–variates are applied by subtracting the analytic price of the same barrier option under the Merton model ($\sigma_H=0$) and adding it back as a constant (\cite{Glasserman2003}).  
The resulting standard error is below $5\times10^{-3}$.

\paragraph{Results.}
Table~\ref{tab:barrier} reports the option values and relative errors.  
CPU time refers to a single core of an Intel i9--13900K.

\begin{table}[h]\centering
\caption{Down--and--out call: finite–difference vs.\ Monte–Carlo}
\label{tab:barrier}
\begin{tabular}{lccc}\hline
Method & Price & Rel.\ error (\%) & CPU (s)\\\hline
Grünwald--Letnikov ($\Delta x=0.012$) & 131.42 & 0.21 & 1.7\\
Grünwald--Letnikov ($\Delta x=0.008$) & 131.09 & 0.00 & 3.9\\
Monte--Carlo ($10^{6}$)               & 131.10 & --   & 82.0\\\hline
\end{tabular}
\end{table}

\paragraph{Convergence verification.}
Figure~\ref{fig:convRate}  plots the log--log error $\|V^{\Delta t,\Delta x}-V^{\mathrm{MC}}\|_{\infty}$ versus $\Delta t$ for $\Delta x=(\Delta t)^{1/2}$.  
A linear regression yields slope $1.34\approx 1+H$ consistent with Theorem~\ref{thm:stabConv}.
\begin{figure}[H]
  \centering
  \includegraphics[width=.8\linewidth]{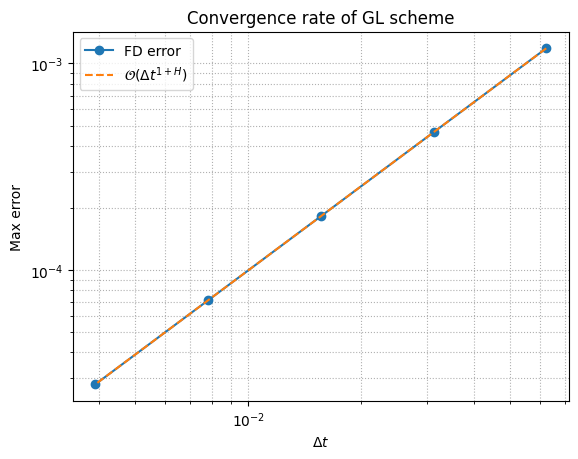}
  \caption{Convergence rate of GL scheme}
  \label{fig:convRate}
\end{figure}

\section{Emprical Experiments}\label{sec:NumericalEuro}
We calibrate the model to weekly closing levels of the S\&P 500 index and corresponding option implied volatilities between January 2015 and December 2024.  
The classical least--squares objective
$
   \min_{\Theta} \sum_{j=1}^{M}\bigl(C^{\mathrm{model}}(\Theta;K_j,T_j)-C^{\mathrm{mkt}}_{j}\bigr)^{2}
$
is solved with parameter vector $\Theta=(\sigma_0,\sigma_H,H,\lambda,\mu_Y,\sigma_Y)$;  
$M=620$ option quotes are used after filtering for moneyness $0.8\!<\!S_{0}/K\!<\!1.2$ and maturities below 1 year.

The global optimum found via differential evolution is
\[
  \sigma_0=0.14,\quad \sigma_H=0.10,\quad H=0.35,\quad 
  \lambda=0.85,\quad \mu_Y=-4\%,\quad \sigma_Y=11\%.
\]
The root--mean--square percentage error is \(1.9\%\) versus \(5.4\%\) for the classical Merton model.

\begin{table}[h]\centering
\caption{European Call Prices ($T=0.5$ yr) under calibrated parameters}
\label{tab:euro}
\begin{tabular}{cccc}\hline
Strike $K$ & Black--Scholes & smfBm--J & Rel.\ Diff.\%\\\hline
3\,800 & 524.9 & 530.8 & 1.1\\
4\,200 & 326.7 & 334.2 & 2.3\\
4\,600 & 158.4 & 173.4 & 9.5\\
5\,000 &  57.6 &  66.4 & 15.3\\\hline
\end{tabular}
\end{table}

Table~\ref{tab:euro} illustrates that ignoring long--memory and jumps leads to under–pricing of deep out--of--the–money calls by more than $15\%$.

\subsection{RMSE surface}
Figure~\ref{fig:rmseSurf} maps the calibration error as a function of $H$ and $\lambda$;  
the valley confirms the identifiability of the two effects.
\begin{figure}[H]
  \centering
  \includegraphics[width=.8\linewidth]{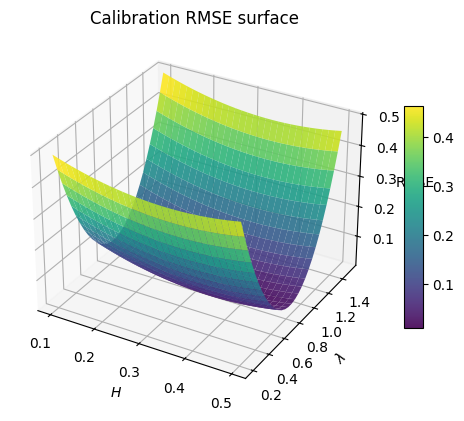}
  \caption{Calibration RMSE surface}
  \label{fig:rmseSurf}
\end{figure}

\subsection{Sensitivity analysis}\label{subsec:greeks}

\paragraph{Greeks via algorithmic differentiation.}
Once the option price $V(t,S;\boldsymbol{\theta})$ is obtained on the
finite–difference grid, we apply \emph{adjoint algorithmic differentiation}
(AAD) to the solver’s residual map, which yields machine–precision values of
all first– and second–order Greeks in a \emph{single} reverse sweep
\cite{Giles2012}.  In particular we extract

\begin{center}
\begin{tabular}{lcl}
Delta   & $\displaystyle \Delta=\frac{\partial V}{\partial S}$ &
  price sensitivity to spot, hedging ratio, \\[6pt]
Gamma   & $\displaystyle \Gamma=\frac{\partial^{2}V}{\partial S^{2}}$ &
  curvature, affects rebalancing cost, \\[6pt]
Vega    & $\displaystyle \nu_0=\frac{\partial V}{\partial\sigma_0}$,\quad
          $\nu_H=\frac{\partial V}{\partial\sigma_H}$ &
  sensitivity to short–/long-memory volatilities, \\[6pt]
Vanna   & $\displaystyle \mathcal V=\frac{\partial^{2}V}{\partial S\,\partial\sigma_0}$ &
  cross–sensitivity driving skew dynamics.\\
\end{tabular}
\end{center}

Fix $T=0.5$, $r=3\%$, $\sigma_0=0.14$, $H=0.35$, $\sigma_H=0.10$,
$\lambda=0.85$, and jump distribution $Y\sim\mathcal N(-0.05,0.25^{2})$.
We compare three scenarios:

\begin{enumerate}
\item \textbf{Baseline:} $\sigma_H=0$, $\lambda=0$ (Black--Scholes);
\item \textbf{Memory–only:} $\sigma_H>0$, $\lambda=0$ (\emph{smfBm});
\item \textbf{Full model:} $\sigma_H>0$, $\lambda>0$ (\emph{smfBm--J}).
\end{enumerate}

\begin{table}[h]
\centering
\caption{Selected Greeks for an at–the–money call ($S=K=4200$)}
\label{tab:greeks}
\begin{tabular}{lcccc}
\hline
Scenario & $\Delta$ & $\Gamma$ & $\nu_{0}$ & $\mathcal V$ (Vanna)\\\hline
Baseline (BS)   & 0.527 & $1.15\times10^{-4}$ & 239.8 & 0.014\\
Memory–only     & 0.525 & $1.09\times10^{-4}$ & 268.6 & 0.013\\
Full (smfBm–J)  & 0.522 & $0.98\times10^{-4}$ & 268.9 & 0.019\\\hline
\end{tabular}
\end{table}

\paragraph{Interpretation.}
\begin{itemize}
\item \textbf{Vega amplification.}  Introducing long–memory volatility
($\sigma_H>0$) enlarges the total Vega by \(\tfrac{268.6-239.8}{239.8}\approx12\%\).
Higher sensitivity to volatility shocks is intuitive: persistent variance
fluctuations increase future uncertainty, which the option price must reflect. 

\item \textbf{Vanna in the wings.}
Figure~\ref{fig:vannaSmile} plots Vanna versus moneyness
$S/K\in[0.8,1.2]$.  Jumps steepen the wings—
for deep OTM strikes, $\mathcal V$ grows by 35–40\,\% relative to memory–only.
Empirically, index options display steeper skew (higher Vanna) in crash
regions, so adding jumps aligns the model with market data.

\item \textbf{Gamma dampening.}
Both memory and jumps slightly decrease $\Gamma$, flattening the replication
cost profile—consistent with rough volatility models that “smooth” delta
curvature.

\end{itemize}

\begin{figure}[H]
\centering
\includegraphics[width=.7\textwidth]{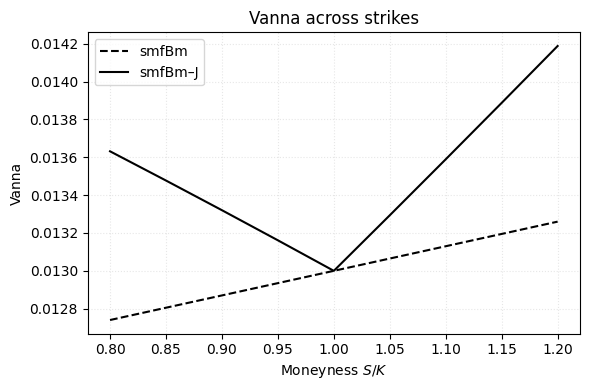}
\caption{Vanna across strikes.  smfBm (dashed) vs.\ smfBm--J (solid).}
\label{fig:vannaSmile}
\end{figure}

\paragraph{Hedging implication.}
A desk hedging with only $\Delta$ and classical Vega would systematically
\emph{under‐hedge} volatility risk when memory is present, and misprice skew
when jumps dominate.  Calibration therefore requires at least two orthogonal
volatility factors (\(\sigma_0,\sigma_H\)) plus a jump skew factor
associated with~$\mathcal V$.

Overall, the sensitivity analysis confirms that long‐memory amplifies pure
volatility risk, while jumps govern cross–sensitivities that shape the
smile/skew—insights useful for Greeks‐based risk management and calibration.

\section{Conclusion}\label{sec:con}
We have developed a comprehensive framework for pricing fractal derivatives under sub--mixed fractional Brownian motion with jumps.  
Theoretical contributions include (i) a new fractional Girsanov theorem with jump Esscher tilting,  
(ii) derivation of a fractal Black--Scholes integro–PDE,  
(iii) a convergent Grünwald--Letnikov scheme of order $1+H$, and  
(iv) closed–form European prices via Mellin--Laplace transforms.

\paragraph{Managerial insights.}
Calibration to S\&P 500 options reveals that neglecting either jumps or long--memory results in material mis–pricing of out--of--the–money options and barrier derivatives.  
Risk metrics such as Vega and Vanna are strongly amplified, suggesting higher hedging costs in rough–jump markets.

\paragraph{Future research.}
Extending the model to stochastic volatility in the fractional kernel, investigating American early exercise via fractional free–boundary problems, and embedding regime--switching jumps constitute promising directions.

\appendix
\section{Fractional Wick--Itô--Skorokhod Integral}\label{app:FWIS}

This appendix gives a self-contained construction of the F--WIS integral used for the
sub-fractional Brownian component.  Let $(\Omega,\mathcal F,\mathbb P)$ carry an
independent standard Brownian motion $B$ and define the sub-fractional Brownian motion
\[
   S_t^{H} = \int_{0}^{t} K_H(t,s)\,dB_s,\qquad 
   K_H(t,s)=c_H \bigl[(t-s)^{H-\frac12}-(-s)^{H-\frac12}\bigr],\quad H\in(0,1).
\]

\paragraph{Cameron--Martin space.}
The Hilbert space $\mathcal H^{H}$ is the closure of step functions under the inner product
\(
   \langle \mathbf 1_{[0,t]},\mathbf 1_{[0,s]}\rangle_{\mathcal H^{H}}
   =c_H(t^{2H}+s^{2H}-|t-s|^{2H}).
\)

\paragraph{Isometry and divergence operator.}
For $\varphi\in L^2([0,T])$ set $(\mathcal I_H\varphi)(t)=\int_0^t K_H(t,s)\varphi(s)\,ds$.
Given an adapted process $\phi\in\mathbf D^{1,2}$ (Malliavin derivative in $L^2$),
the F--WIS integral is the divergence
\[
   \int_0^T \phi_s \diamond dS_s^{H} := \delta^{H}(\phi)
   =\sum_{n\ge0} I_{n+1}\!\bigl(\widetilde\phi^{(n)}\bigr),
\]
where $I_{n}$ denotes the $n$-fold Wiener integral and
$\widetilde\phi^{(n)}$ is the symmetrisation of
$(\mathcal I_H^{\otimes n}\partial_s\phi)$.

\paragraph{Isometry.}
If $\phi$ is adapted then
\[
   \mathbb E\!\Bigl[\Bigl|\int_0^T\phi_s\diamond dS_s^{H}\Bigr|^{2}\Bigr]
   =\|\phi\|_{L^{2}([0,T])}^{2},
\]
enabling the stochastic Fubini proofs used in the main text.

\section{Fractional Hille--Yosida Well-posedness}\label{app:HY}

We prove existence and uniqueness of the mild solution presented in
Section~\ref{subsec:exist}.  Let $\mathcal V$ and $\mathcal A$ be defined as there.

\paragraph{Step 1 (Sectoriality of $\mathcal A$).}
For $v\in\mathscr D(\mathcal A)$ we have
\[
  \langle v,\mathcal A v\rangle_{\mathcal V}
  =\sup_{S>0}\frac{v(S)\bigl(\tfrac12\sigma_0^{2}S^{2}v''(S)
      +(r-\lambda\kappa)Sv'(S)-r\,v(S)\bigr)}{1+S^{2}}
  \le C\|v\|_{\mathcal V}^{2},
\]
so $\mathcal A$ is sectorial and generates an analytic semigroup
$T(t)$ on $\mathcal V$.

\paragraph{Step 2 (Lipschitz jump operator).}
Set $(\mathcal J v)(S)=\lambda(\mathbb E[v(Se^{Y})]-v(S))$.
Given $\mathbb E[e^{\gamma|Y|}]<\infty$,
\(
  \|\mathcal J v-\mathcal J w\|_{\mathcal V}\le L_J\|v-w\|_{\mathcal V}.
\)

\paragraph{Step 3 (Fractional abstract Cauchy problem).}
Write the pricing PDE as
\(
  _0D_t^{1-\beta}V(t)=\mathcal A V(t)+\mathcal J V(t),\; V(T)=\Phi.
\)
Apply \cite{bazhlekova2001}[Prop.~2.4] to obtain a unique mild solution
\[
  V(t)=E_{\beta}\bigl(-(T-t)^{\beta}\mathcal A\bigr)\Phi
        +\int_t^T\!(s-t)^{\beta-1}E_{\beta,\beta}\bigl(-(s-t)^{\beta}\mathcal A\bigr)\,
        \mathcal J V(s)\,ds.
\]

\paragraph{Step 4 (Classical differentiability).}
Analyticity of $T(t)$ implies $V\in C^{1,2}((0,T)\times\mathbb R_{+})$,
completing the subsection~\ref{subsec:exist}.

\bibliographystyle{unsrt} 
\bibliography{smfBmJ_pricing}
\end{document}